\newenvironment{lenumerate}[2][]
{\begin{enumerate}[label=(#2\arabic*),leftmargin=0.2in,itemindent=0.15in,#1]}
{\end{enumerate}}
\setlist*[enumerate,1]{label={\itshape\arabic*)}}
\newcommand{\paragraphswithstop}{%
\let\copyparagraph\paragraph%
\renewcommand\paragraph[1]{\copyparagraph{##1.}}%
}
\def\namedlabel#1#2{\begingroup
  #2%
  \def\@currentlabel{#2}%
  \phantomsection\label{#1}\endgroup
}
\def\namedlabelphantom#1#2{\begingroup
  \def\@currentlabel{#2}%
  \phantomsection\label{#1}\endgroup
}
\newcommand{\parunskip}{\bgroup\unskip\parfillskip=0pt \par\egroup}
\newcommand{\real}[1]{\mathbb{R}^{#1}{}}
\newcommand{\naturals}[1]{\mathbb{N}^{#1}{}}
\DeclarePairedDelimiter{\norm}{\lVert}{\rVert}
\newcommand{\metrica}[3][]{\langle #2, #3\rangle_{#1}}
\newcommand{\de}{\mathrm{d}}
\DeclareMathOperator{\grad}{{grad}}
\DeclareMathOperator{\hess}{{Hess}}
\newcommand{\subjectto}{\textrm{subject to}\;}
\providecommand{\cC}{\mathcal{C}}
\providecommand{\cF}{\mathcal{F}}
\providecommand{\cK}{\mathcal{K}}
\providecommand{\cP}{\mathcal{P}}
  \newcommand{\newcolorlabel}[2]{%
  \expandafter\newcommand\csname #1\endcsname[1]{%
    \tikz[baseline]{\node[text=white,fill=#2,anchor=base,text height=1.3ex,text depth=0.1ex,font=\sffamily\bfseries]{##1}}}%
}
\newcommand{\newcommenter}[2]{%
  \expandafter\newcommand\csname #1\endcsname[1]{%
    \fcolorbox{#2}{#2}{\color{white}\textsf{\textbf{#1}}}
    {\color{#2}##1}}%
  \expandafter\newcommand\csname at#1\endcsname{%
    \fcolorbox{#2}{#2}{\color{white}\textsf{\textbf{@#1}}}
    {\color{#2}}}%
  \expandafter\newcommand\csname #1cite\endcsname[1]{%
    \csname #1\endcsname {[##1]}
  }%
  \expandafter\newcommand\csname #1ref\endcsname[1]{%
    \csname #1\endcsname {$\blacktriangleright$##1}
  }%
  \expandafter\newcommand\csname #1hl\endcsname[2]{%
    \colorbox{#2}{\color{white}\textsf{\textbf{#1}}}\sethlcolor{Azure2}\hl{##2}~%
    \expandafter\ifx\csname commentarrow\endcsname\relax$\leftarrow$\else \commentarrow[#2]\fi~%
    {\color{#2}##1}}%
  \expandafter\newcommand\csname #1st\endcsname[2]{%
    \colorbox{#2}{\color{white}\textsf{\textbf{#1}}}\sout{##2}~%
    \expandafter\ifx\csname commentarrow\endcsname\relax$\leftarrow$\else \commentarrow[#2]\fi~%
    {\color{#2}##1}}%
}
\newsavebox{\boxifnotempty}
\newcommand{\displayifnotempty}[3]{\sbox\boxifnotempty{#2}\setbox0=\hbox{\usebox{\boxifnotempty}\unskip}%
  \ifdim\wd0=0pt
  \else
  #1\usebox{\boxifnotempty}#3%
  \fi%
}
\newcommand{\ifempty}[2]{\setbox0=\hbox{#1\unskip}%
  \ifdim\wd0=0pt%
  #2%
  \fi%
}
\newcommand{\ifnotempty}[2]{\setbox0=\hbox{#1\unskip}%
  \ifdim\wd0>0pt%
  #2%
  \fi%
}
\newcommand{\switchifempty}[3]{\sbox\boxifnotempty{#1}\setbox0=\hbox{\usebox{\boxifnotempty}\unskip}%
  \ifdim\wd0=0pt{}%
  #2%
  \else{}%
  #3%
  \usebox{\boxifnotempty}%
  \fi%
}
\newcommand*\newstoreddef[1]{
  \BeforeClosingMainAux{%
    \immediate\write\@auxout{%
      \string\restoredef{#1}{\csname #1\endcsname}%
    }%
  }%
}
\newcommand*{\restoredef}[2]{
  \expandafter\gdef\csname stored@#1\endcsname{#2}%
}
\newcommand*{\storeddef}[1]{
  \@ifundefined{stored@#1}{0}{\csname stored@#1\endcsname}%
}
\tikzset{
  dim above/.style={to path={\pgfextra{
        \pgfinterruptpath
        \draw[>=latex,|->|] let
        \p1=($(\tikztostart)!1.5em!90:(\tikztotarget)$),
        \p2=($(\tikztotarget)!1.5em!-90:(\tikztostart)$)
        in(\p1) -- (\p2) node[pos=.5,sloped,above]{#1};
        \endpgfinterruptpath
      }
    }
  },
  dim double above/.style={to path={\pgfextra{
        \pgfinterruptpath
        \draw[>=latex,|->|] let
        \p1=($(\tikztostart)!3em!90:(\tikztotarget)$),
        \p2=($(\tikztotarget)!3em!-90:(\tikztostart)$)
        in(\p1) -- (\p2) node[pos=.5,sloped,above]{#1};
        \endpgfinterruptpath
      }
    }
  },
  dim below/.style={to path={\pgfextra{
        \pgfinterruptpath
        \draw[>=latex,|->|] let
        \p1=($(\tikztostart)!-1em!-90:(\tikztotarget)$),
        \p2=($(\tikztotarget)!-1em!90:(\tikztostart)$)
        in (\p1) -- (\p2) node[pos=.5,sloped,below]{#1};
        \endpgfinterruptpath
      }
    }
  },
}
\tikzset{
    right angle quadrant/.code={
        \pgfmathsetmacro\quadranta{{1,1,-1,-1}[#1-1]}     
        \pgfmathsetmacro\quadrantb{{1,-1,-1,1}[#1-1]}},
    right angle quadrant=1, 
    right angle length/.code={\def\rightanglelength{#1}},   
    right angle length=2ex, 
    right angle symbol/.style n args={3}{
        insert path={
            let \p0 = ($(#1)!(#3)!(#2)$) in     
                let \p1 = ($(\p0)!\quadranta*\rightanglelength!(#3)$), 
                \p2 = ($(\p0)!\quadrantb*\rightanglelength!(#2)$) in 
                let \p3 = ($(\p1)+(\p2)-(\p0)$) in  
            (\p1) -- (\p3) -- (\p2)
        }
    }
}
\newcommand{\pgfextractangle}[3]{%
    \pgfmathanglebetweenpoints{\pgfpointanchor{#2}{center}}
                              {\pgfpointanchor{#3}{center}}
    \global\let#1\pgfmathresult
}
\newcommand{\commentarrow}[1][Azure4]{\tikz[baseline=-3pt]{\node[shape border uses incircle, fill=#1,rotate=180,single arrow, inner sep=1pt, minimum size=6pt, single arrow head extend=2pt]{};}}
\tikzset{ax/.style={-latex,line width=2pt}}
\tikzset{camera/.style={fill=Sienna1,fill opacity=0.5},%
image plane/.style={draw=RoyalBlue3,line width=2pt}}
\newcommand{\safeSet}{\cC_h}
\newcommand{\poseSet}{\cP}
\newcommand{\symmetricM}[1]{\mathbb{S}^{#1}_{++}}
\newcommand{\projectionM}[1]{P_{#1}}
\newcommand{\hatMap}[1]{ {\left [ #1 \right]}_{\times} }
\newcommand{\agentCoordinates}{x_p}
\newcommand{\cameraAxis}{e_c}
\newcommand{\bearing}{\beta}
\newcommand{\derr}{\tilde{d}}
\title{\LARGE \bf
    A Control Barrier Function Candidate for Quadrotors\\with Limited Field of View
}
\author{Biagio Trimarchi$^1$, Fabrizio Schiano$^2$, Roberto Tron$^{3}$
  \thanks{$^1$  Biagio Trimarchi is with the
    Center for Research on Complex Automated Systems (CASY), Department
    of Electrical, Electronic and Information Engineering (DEI), University
    of Bologna, Bologna, Italy
    {\tt\small biagio.trimarchi@unibo.it}%
  }
  \thanks{$^2$Fabrizio Schiano is with Leonardo S.p.a., Leonardo Innovation Labs, Rome, Italy
    {\tt\small fabrizio.schiano@leonardo.com}%
  }
  \thanks{$^{3}$Roberto Tron is with the Department of Mechanical Engineering,
    Boston University, 110 Cummington Mall, MA 02215, United States
    {\tt\small tron@bu.edu}}%
}
\begin{document}

\maketitle
\thispagestyle{empty}
\pagestyle{empty}

\begin{abstract}
  The problem of control based on vision measurements (bearings) has been amply studied in the literature; however, the problem of addressing the limits of the field of view of physical sensors has received relatively less attention (especially for agents with non-trivial dynamics). The technical challenge is that, as in most vision-based control approaches, a standard approach to the problem requires knowing the distance between cameras and observed features in the scene, which is not directly available.
  Instead, we present a solution based on a Control Barrier Function (CBF) approach that uses a \emph{splitting} of the original differential constraint to effectively remove the dependence on the unknown measurement error. Compared to the current literature, our approach gives strong robustness guarantees against bounded distance estimation errors.
  We showcase the proposed solution with the numerical simulations of a double integrator and a quadrotor tracking a trajectory while keeping the corners of a rectangular gate in the camera field of view.
\end{abstract}

\section{Introduction}
Due to payload and battery power limitations, visual sensors (such as stereo and depth cameras) are the most common sensor type for controlling small Unmanned Aerial Vehicles (UAVs) and are extensively used to accomplish various tasks, such as localization, state estimation, and target tracking. However, these sensors come with two penalizing limitations that may hinder the performance of the algorithms that rely on them: their limited field of view and the absence of a fast and reliable way to obtain accurate distance estimation.

Addressing the limited field of view of visual sensors in vision-based control algorithms design has always been an active field of research \cite{Chaumette2006Tutorial, Gans2011Camera}, and it is still active today, especially on quadrotors, for which there are  additional challenges due to under-actuation and the high speed movement. 
The problem is tackled by the authors of \cite{penin2017Vision, Penin2018RH}, which propose a receding horizon control scheme to track both static targets with known position \cite{penin2017Vision} and moving targets of known dimension in the presence of occlusion \cite{Penin2018RH}. In the context of static target tracking, the authors of \cite{Murali2019Perception, Spasojevic2020Perception} proposed two strategies to take the limited field of view into account during trajectory planning and showed how this improved both the localization capability of the UAV and reduced the tracking error of the controller. The issue is also discussed for drones equipped with down-facing cameras in \cite{Roque2020linearMPC}, which proposes a linear MPC scheme around a quasi-hovering state, and in \cite{zheng2016image}, which proposes a Control Barrier Function \cite{ames2019control} approach to enforce the constraint. Another interesting contribution to the field is given by \cite{Lu2022Flight}, in which the field of view constraint is enforced with a non-linear MPC-based control scheme in the context of collision-free navigation.

The works mentioned so far suppose either the positions of the features to track to be known or that is possible to measure or estimate their depth in the image from onboard sensors. More often than not, however, this assumption does not hold. Knowing the exact relative position requires accurate localization algorithms, and the depth estimated by commercially available cameras is often noisy, or reliable only over a limited range \cite{Real435, Zed2}. For these reasons, there has been active research to develop bearing-only algorithms for navigation in both the single- \cite{Tron2014Bearing, Greenawalt2019Bearing} and multi-agent settings \cite{schiano2016rigidity, schiano2017bearing, schiano2018dynamic}. In this line of work, the authors of \cite{4435005} propose a bearing-only navigation strategy inspired by biological agents, and the authors of \cite{Karimian2021Bearing} propose one that is robust to partial occlusion; instead, in \cite{schiano2018dynamic} use only bearing measurements to coordinate a group of UAVs and estimate their relative pose.

To address both of the mentioned issues, we present a control law robust to distance measurement errors, based on the theory of Control Barrier Functions, that mathematically guarantees that a set of visual features remains in the field of view of an agent equipped with a camera. The presented approach is generic, and can be applied to any first- or second-order controlled mechanical system, such as quadrotors. As a byproduct of our analysis, we prove the viability of the approach even when distances are not measured but belong to a known (conservative) range. \Cref{fig:gates} illustrates a possible application of the proposed control strategy in a drone racing setting, where a quadrotor needs to keep the target gates in their field of view as long as possible \cite{kaufmann2023champion}.

\begin{figure}
  \centering
  \usetikzlibrary{math}

\newcommand{\drawXs}[5]{
    \tikzmath{
        \angleRad = #4;
        \xOffset = #3 * cos(\angleRad);
        \yOffset = #3 * sin(\angleRad);
        \xPos1 = #1 - \xOffset;
        \yPos1 = #2 - \yOffset;
        \xPos2 = #1 + \xOffset;
        \yPos2 = #2 + \yOffset;
    }
    \draw[red, thick, rotate around={#4:(\xPos1,\yPos1)}] (\xPos1 - #5, \yPos1 - #5) -- (\xPos1 + #5, \yPos1 + #5);
    \draw[red, thick, rotate around={#4:(\xPos1,\yPos1)}] (\xPos1 - #5, \yPos1 + #5) -- (\xPos1 + #5, \yPos1 - #5);
    \draw[red, thick, rotate around={#4:(\xPos2,\yPos2)}] (\xPos2 - #5, \yPos2 - #5) -- (\xPos2 + #5, \yPos2 + #5);
    \draw[red, thick, rotate around={#4:(\xPos2,\yPos2)}] (\xPos2 - #5, \yPos2 + #5) -- (\xPos2 + #5, \yPos2 - #5);

    \draw[red, dashed] (\xPos1, \yPos1) -- (\xPos2, \yPos2)
}

\scalebox{0.87}{\begin{tikzpicture}

\coordinate (center) at (0, 0);
\def\armLength{0.4};
\def\inclination{-20}
\def\rotorRadius{0.2}

\draw[thick, blue] (0,0)
.. controls (2, 3) and (3, 2) .. (3,1) ..
controls (3, -2) and (5,1) .. (6,1) .. controls (7, 1) and (7, 0) .. (7, -1) .. controls (7, -3) and (5, -4) .. (4, -4) .. controls (3, -4) and (-2, -4) .. cycle;

\draw[thick] ($(center) + (\inclination:-(\armLength - \rotorRadius)$) -- ($(center) + (\inclination:\armLength-\rotorRadius)$);
\draw[thick] ($(center) + (\inclination+90:-(\armLength-\rotorRadius)$) -- ($(center) + (\inclination+90:\armLength-\rotorRadius)$);

\draw[thick] ($(center) + (\inclination:\armLength)$) circle [radius=\rotorRadius];

\draw[thick] ($(center) + (\inclination:-\armLength)$) circle [radius=\rotorRadius];

\draw[thick] ($(center) + (\inclination+90:-\armLength)$) circle [radius=\rotorRadius];

\draw[thick] ($(center) + (\inclination+90:\armLength)$) circle [radius=\rotorRadius];

    \def\coneRadius{3.0}
    \def\coneOrientation{\inclination + 45}
    \def\coneHalfOpening{60}
        
    \coordinate (P) at (0,0);
    \coordinate (Q1) at (0.7,2.5);
    \coordinate (Q2) at (2.5,3);
       
    \fill[blue!30, opacity=0.5] (P) -- ++(\coneOrientation-\coneHalfOpening:\coneRadius) arc (\coneOrientation-\coneHalfOpening:\coneOrientation+\coneHalfOpening:\coneRadius) -- cycle;
        
    \draw[dashed] (P) -- ++(\coneOrientation:\coneRadius);

    \draw[black] (P) + (\coneOrientation - \coneHalfOpening: 1.0) arc (\coneOrientation-\coneHalfOpening:\coneOrientation:1.0);

    \filldraw (P) + (\coneOrientation - \coneHalfOpening/2.5:1.0) node [below, right] {$\psi_F$};
    \filldraw (P) + (\coneOrientation:\coneRadius-0.2) node [above] {$e_c$};

    \drawXs{1.3}{1.5}{0.5}{-45.0}{0.1};
    \drawXs{3.4}{-0.15}{0.5}{30.0}{0.1};
    \drawXs{7.0}{-1.0}{0.5}{0.0}{0.1};
    \drawXs{6.0}{-3.2}{0.5}{105.0}{0.1};
    \drawXs{2.2}{-3.8}{0.5}{90.0}{0.1};
    \drawXs{-0.5}{-2.0}{0.5}{00.0}{0.1};
    
\end{tikzpicture}}
  \caption{A 2-D visual task problem: a quadrotor traversing a race circuit needs to keep the gates (depicted in red) as long as possible in its field of view (the blue cone) to orient itself and race through the circuit.}
  \label{fig:gates}
\end{figure}


\section{Notation and Preliminaries}
\label{sec:Preliminaries}
In this section, we introduce the notation and the theoretical fundamentals that will be used throughout the paper. We refer the reader to the cited literature for proofs and derivations. 

\subsection{Notation}
In the following $\real{}$ denotes the set of reals number, $SO(3)$ denotes the special orthogonal group of dimension 3, and $\symmetricM{n}$ denotes the set of positive definite and symmetric matrices of dimension $n \times n$. Given two vectors $v_1, v_2 \in \real{2}$, $\angle(v_1, v_2)$ denotes the angle between the two vectors.

The \emph{hat map} $\hatMap{\cdot}: \real{3} \to \real{3 \times 3}$ associates a vector $v \in \real{3}$ to a skew-symmetric matrix according to the formula:
\begin{align*}
  \hatMap{v}
  =
  \begin{bmatrix}
    0 & -v_3 & v_2 \\
    v_3 & 0 & -v_1 \\
    -v_2 & v_1 & 0
  \end{bmatrix}
\end{align*}

We say that a function $\gamma: \real{} \to \real{}$ is a \emph{class $\mathcal{K}$ function} if $\gamma(0) = 0$ and it is strictly increasing. We denote as $\gamma'(x) = \frac{\de \gamma}{\de x}(x)$ the first derivative of $\gamma$ with respect to its argument.

In the following, let $\poseSet = \real{3} \times SO(3)$, and let $T_x \poseSet$ denotes its tangent space at $x \in \poseSet$. $\poseSet$ is a Riemannian manifold\cite{boumal2023introduction}, equipped at each point $x \in \cP$ with the usual metric (inner product) ${\metrica{v}{w}}_{x} : T_x \poseSet \times T_x \poseSet \to \real{}$. By identifying the tangent space of $SO(3)$ with $\real{3}$, we can consider $T_x \poseSet = \real{6}$ and define the inner product as
\begin{align*}
  {\metrica{v}{w}}_x = \sum_{i=1}^{6} v_i w_i.
\end{align*}

Given a twice continuously differentiable function $h : \poseSet \to \real{}$  the function $\grad h : \poseSet \to T_x\poseSet$ will denote its \emph{gradient} and the function $\hess h [\cdot]: T_x\poseSet \to T_x\poseSet$ will denote its \emph{Hessian}.

Since $\cP$ is a product manifold, each point $x \in \poseSet$ is a tuple $(p, R)$, and we can identify the components of $\grad$ and $\hess$ corresponding to each component.
We define the two linear maps $\grad_p h : T_p \real{3} \to \real{3}$ and $\grad_R h : T_R SO(3) \to \real{}$ such that, for any tangent vectors $v \in T_p \real{3}$ and $\omega \in T_R SO(3)$ it holds that:
\begin{align*}
  \metrica{\grad h}{ \begin{bmatrix}
    v \\ 0
  \end{bmatrix}} = \metrica{\grad_p h}{v}, \
  \metrica{\grad h}{ \begin{bmatrix}
    0 \\ \omega
  \end{bmatrix}} = \metrica{\grad_R h}{\omega}.
\end{align*}

Respectively, for the  Hessian, we define two linear maps $\hess_p h [\cdot]: T_p\real{3} \to T_p\real{3}$ and $\hess_R h[\cdot]: T_R SO(3) \to T_R SO(3)$ such that:
\begin{align*}
  \hess h \left[
  \begin{bmatrix}
    v \\ 0
  \end{bmatrix} \right] = \hess_p [v], \
  \hess h \left[ \begin{bmatrix}
    0 \\ \omega
  \end{bmatrix} \right] = \hess_R [\omega].
\end{align*}

Given a vector field $f: \poseSet \to \poseSet$ and a function $h: \poseSet \to \real{}$, the \emph{Lie Derivative} of $h$ with respect to $f$ at $x \in \poseSet$ is denoted as $L_f h(x)$ and is defined as $L_f h(x) = \metrica{\grad_p h}{f(x)}$.

\subsection{Preliminaries}
In this section, we review some simple facts.
\begin{proposition}
  \label{th:var_alpha}
  Consider $x \in \real{}$ and the first order differential equation
  \begin{align}
    \label{eq:var_alpha}
    \dot{x} = -\alpha(x, t)
  \end{align}

  where $\alpha: \real{} \times \real{} \to \real{}$ is continuously differentiable, locally Lipschitz in both arguments and, for each $t \in \real{}$, $\alpha(t, x)$ is a class $\cK$ function of $x$. Then $x = 0$ is a globally uniformly asymptotically stable equilibrium point for (\ref{eq:var_alpha}). Moreover, if $x(0)\geq 0$ then $x(t)\geq 0$ for all $t>0$.
\end{proposition}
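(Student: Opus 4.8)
I would split the statement into its two parts: global uniform asymptotic stability of the origin, handled by a Lyapunov argument, and forward invariance of $\{x\ge 0\}$, handled by a uniqueness-of-solutions argument.

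For the stability part, the natural candidate is the time-independent, radially unbounded, positive definite function $V(x)=\tfrac12 x^2$. Along solutions of \eqref{eq:var_alpha},
\begin{align*}
  \dot V = x\,\dot x = -\,x\,\alpha(x,t).
\end{align*}
Since $\alpha(\cdot,t)$ is a class $\cK$ function for every fixed $t$ (so $\alpha(0,t)=0$ and $\alpha(\cdot,t)$ is strictly increasing), the product $x\,\alpha(x,t)$ is strictly positive whenever $x\neq 0$ and vanishes only at $x=0$; hence $\dot V(x,t)<0$ for all $t$ and all $x\neq 0$. Because $V$ does not depend on $t$, the usual sandwiching of $V$ by class $\cK_\infty$ functions of $|x|$ is automatic, and with $\dot V$ negative definite the standard Lyapunov theorem for time-varying systems gives uniform asymptotic stability of $x=0$, which radial unboundedness of $V$ promotes to the global statement. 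As a preliminary I would also note that $|x(t)|$ is nonincreasing (immediate from $\dot V\le 0$), so solutions are bounded and hence defined for all $t\ge t_0$, which is what makes the asymptotic claim well posed.

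For the invariance part, observe that $\alpha(0,t)=0$ for all $t$, so $x(t)\equiv 0$ is a solution of \eqref{eq:var_alpha}. Local Lipschitzness of $\alpha$ gives uniqueness of solutions through any initial condition, so a solution with $x(0)\ge 0$ can never attain a negative value: to do so it would have to cross the graph of the zero solution, contradicting uniqueness. (Equivalently, on $\{x>0\}$ one has $\dot x=-\alpha(x,t)<0$, so $x$ decreases and remains in $[0,x(0)]$.) Hence $x(t)\ge 0$ for all $t>0$.

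The step I expect to be the main obstacle is the word \emph{uniform} (and the globalness) in the first claim: negative-definiteness of $\dot V$ for each fixed $t$ is trivial, but the nonautonomous Lyapunov theorem needs a single, time-independent negative-definite bound $\dot V(x,t)\le -W_3(x)$, i.e. a uniform-in-$t$ class-$\cK$ lower bound on $\alpha(\cdot,t)$. I would try to extract such a bound from the stated regularity together with the class-$\cK$ structure, and, if that is not quite enough, flag the mild additional assumption that makes it go through in the paper's setting. An equivalent packaging avoiding an explicit Lyapunov theorem is the Comparison Lemma applied to $\dot V\le -W_3(\sqrt{2V})$, which bounds $|x(t)|$ by the solution of a scalar autonomous ODE and directly produces the $\cK\cL$ estimate characterizing GUAS.
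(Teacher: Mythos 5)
Your proposal follows essentially the same route as the paper: the paper also takes $V(x)=\tfrac12 x^2$, computes $\dot V=-x\,\alpha(x,t)$, observes it is negative for $x\neq 0$, and then simply invokes ``the Lyapunov theorem for nonlinear autonomous systems'' from Khalil to conclude; for the invariance part it argues by contradiction that a trajectory reaching $x(t'')=0$ would have $\dot x(t'')=0$ and hence stay at zero forever, which is your uniqueness argument in disguise (the conclusion $x(t)=0$ for all $t>t''$ needs uniqueness of the zero solution, exactly as you make explicit). Your version of the invariance step is the more careful of the two.

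The obstacle you flag for the word \emph{uniform} is genuine, and the paper does not resolve it: pointwise-in-$t$ negativity of $\dot V$ does not yield a time-independent bound $\dot V\le -W_3(|x|)$, and the cited theorem for autonomous systems does not apply to \eqref{eq:var_alpha} as written. Indeed, with the stated hypotheses alone the GUAS claim can fail: take $\alpha(x,t)=e^{-t}x$, which is $C^1$, locally Lipschitz, and class $\cK$ in $x$ for every fixed $t$, yet $x(t)=x(0)\exp(e^{-t}-1)\to x(0)e^{-1}\neq 0$, so the origin is stable but not attractive. A uniform-in-$t$ class-$\cK$ lower bound on $\alpha(\cdot,t)$ (or some persistency-type condition) is the missing hypothesis; you were right not to claim you could extract it from what is stated. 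Since the paper only ever uses this proposition through the sign-invariance part and the comparison lemma, the gap is inherited from the paper rather than introduced by you.
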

\begin{proof}
  Consider the Lyapunov function $V = \frac{1}{2} x^2$; the time derivative is
  \begin{align*}
    \dot{V} = -\alpha(t, x) x.
  \end{align*}
  Since $\alpha(t, x)$ is a class $\cK$ function for any $t$, we have that $\dot{V} = 0$ if $x = 0$, and $\dot{V} < 0$ otherwise. The first part of the claim then follows directly from the Lyapunov theorem for nonlinear autonomous systems \cite{khalilnonlinear}. For the second part, assume by way of contradiction that there exists $t'>0$ such that $x(t')<0$; then, from the continuity of the solution to the differential equation, there exists a $t''$ such that $x(t'')=0$; however, this implies $\dot{x}(t'')=0$ (for any value of $\alpha$), and $x(t)=0$ for all $t>t''$, leading to a contradiction at $t'$.
\end{proof}

\begin{definition}
  Given a unit vector $v \in \real{n}$, the matrix
  \begin{align*}
    P_v = I - v v^T \in \real{n \times n}
  \end{align*}
  is called the \emph{projection matrix} associated to $v$.
\end{definition}

\begin{lemma}
  Each unit vector $v \in \real{n}$ belongs to the kernel of its projection matrix $P_v$, i.e. $v \in Ker(P_v)$.
\end{lemma}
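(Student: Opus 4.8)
The plan is to verify the claim by a single direct computation, namely by applying $P_v$ to $v$ and showing the result is the zero vector. First I would substitute the definition $P_v = I - vv^T$ just introduced, which gives
\begin{align*}
  P_v v = (I - vv^T)v = v - (vv^T)v = v - v(v^Tv),
\end{align*}
where the last equality uses associativity of matrix multiplication to regroup the product, noting that $v^Tv \in \real{}$ is a scalar. Next I would invoke the hypothesis that $v$ is a \emph{unit} vector, so that $v^Tv = \norm{v}^2 = 1$. Substituting this yields $P_v v = v - v = 0$, which is exactly the assertion that $v \in Ker(P_v)$.

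There is essentially no obstacle in this argument; it is an immediate consequence of the definition of the projection matrix, and no earlier result in the paper is needed. The only point worth stating explicitly is the regrouping $(vv^T)v = v(v^Tv)$, which is valid because matrix multiplication is associative and $v^Tv$ is a $1\times 1$ block (a scalar) that commutes with everything. If desired, I would add one sentence of interpretation: $P_v$ is the orthogonal projector onto the hyperplane $v\orth$, so $v$ being annihilated by $P_v$ is the expected geometric fact, and in particular $P_v$ is singular for every unit $v$.
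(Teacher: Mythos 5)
Your proof is correct and follows exactly the same direct computation as the paper's: expand $P_v v = (I - vv^T)v$ and use $v^Tv = 1$ to conclude $P_v v = 0$. The only difference is that you spell out the associativity step and the unit-norm hypothesis explicitly, which the paper leaves implicit.
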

\begin{proof}{}
  The statement follows from the definition:
  \begin{align*}
    P_v v = (I - vv^T) v = v - vv^Tv = v - v = 0.
  \end{align*}
\end{proof}

\subsection{Control Barrier Functions}
Consider an affine dynamical system of the form:
\begin{align}
  \label{eq:affine_dynamics}
  \dot{x} = f(x) + g(x) u,
\end{align}
where $x \in \mathcal{X} \subset \real{n}$ is the \emph{state} of the system, $u \in \real{m}$ is the \emph{control input}, and $f: \mathcal{X} \to \real{n}$ and $g: \mathcal{X} \to \real{n \times m}$ are continuously Lispchitz functions.
\begin{definition}
  Given a continuously differentiable function $h: \real{n} \to \real{}$, we define the \emph{safe set} $\cC_h \subset \real{n}$ associated to the function $h$ as
  \begin{align}
    \label{eq:safe_set}
    \cC_h = \left \{ x \in \real{n} \mid h(x) \geq 0 \right \}, \nonumber\\
    \partial \cC_h = \left \{ x \in \real{n} \mid h(x) = 0 \right \}, \\
    Int(\cC_h) = \left \{ x \in \real{n} \mid h(x) > 0 \right \}. \nonumber
  \end{align}
  .
\end{definition}

\begin{definition}
  A continuously differentiable function $h : \real{n} \to \real{}$ is a \emph{control barrier function} for the dynamical system \eqref{eq:affine_dynamics} if there exists a class $\mathcal{K}$ function $\gamma$ such that:
  \begin{align*}
    \max_{u \in \real{m}} L_f h(x) + L_g h(x) u + \alpha h(x) \geq 0, \forall x \in Int(\safeSet)
  \end{align*}
\end{definition}

\begin{definition}
  We say that the \emph{relative degree} of $h: \real{n} \to \real{}$ with respect to $u$ in equation \eqref{eq:affine_dynamics} is $r \in \naturals{}$, if:
  \begin{align*}
    L_g L_f^k h(x) &= 0, \ \forall i = \{0, 1, \ldots, r-2\} \\
    L_g L_f^{r-1} h(x) &\neq 0.
  \end{align*}
\end{definition}

\begin{definition}[HOCBF \cite{xiao2022HighOrder}]
  \label{def:HOCBF}
  For a $r^{th}-$order differentiable function $h: \real{n} \to \real{}$, we consider a sequence of functions $ \psi_i : \real{n} \to \real{} $ defined as:
  \begin{align}
    \label{eq:hocbf}
    \psi_i(x) = \dot{\psi}_{i-1}(x) + \gamma_i(\psi_{i-1}(x)), \quad i \in \{ 1, \ldots, r \}
  \end{align}
  where $\gamma_i : \real{} \to \real{}$ are class $\mathcal{K}$ functions and $\psi_0(x)= h(x)$.

  We further define a sequence of sets $C_i$ associated with \eqref{eq:hocbf} as:
  \begin{align*}
    C_i = \{ x \in \real{n} \mid \psi_{i-1}(x) \geq 0 \}, \quad i \in \{ 1, \ldots, r \}.
  \end{align*}

  The function $h(x)$ is said to be a candidate \emph{High Order Barrier Function} (HOCBF) of relative degree $r$ for system \eqref{eq:affine_dynamics} if there exist differentiables class $\mathcal{K}$ functions $\gamma_i$, $i \in \{ 1, \ldots, r\}$ such that
  \begin{align*}
    \sup_{u \in \real{m}} [
    & L_f^r h(x) + L_g L_f^{r-1} h(x) u + \\
    & O(h(x)) + \gamma_r(\psi_{r-1}(h(x))
      ] \geq 0
  \end{align*}
  for all $x \in \bigcup_{i=1}^r C_r$, where we have defined
  \begin{align*}
    O(h(x)) = \sum_{i=1}^{r} L_f^i (\gamma_{r-i}(\psi_{r-i-1}(x)))
  \end{align*}

  Given a HOCBF $h(x)$, we define the following set of control inputs:
  \begin{align}
    \label{eq:safe_controller}
    K_{hocbf}(x) = \{
    u \in \real{m} \mid & L_f^r h(x) + L_g L_f^{r-1} h(x) u + \\
                        & O(h(x)) + \gamma_r(\psi_{r-1}(x)) \geq 0
                          \} \nonumber
  \end{align}
\end{definition}

\begin{theorem}[\cite{xiao2022HighOrder}]
  \label{th:HOCBF}
  Given an HOCBF $h : \real{n} \to \real{}$ and its associated sets $C_i$, $i \in \{1, \ldots, r \}$, if $x(t_0) \in \bigcup_{i=1}^r C_r$, then any Lipschitz continuous controller $u(x) \in K_{hocbf}(x)$ renders the set $\bigcup_{i=1}^r C_r$ forward invariant.
\end{theorem}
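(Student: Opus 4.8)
The plan is to strip away the constraints baked into $K_{hocbf}$ one layer of the recursion \eqref{eq:hocbf} at a time, turning each resulting scalar differential inequality into forward invariance of a set $C_i$ by means of \Cref{th:var_alpha} together with the comparison lemma. (Here I read the set to be shown invariant as $\bigcap_{i=1}^{r} C_i$, which is what makes the claim true and matches \cite{xiao2022HighOrder}.)

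First I would fix a locally Lipschitz feedback $u(x)\in K_{hocbf}(x)$. Since $f$, $g$ and $u$ are locally Lipschitz, the closed-loop field $x\mapsto f(x)+g(x)u(x)$ has a unique solution $x(t)$ through $x(t_0)$, along which every $\psi_i$ is differentiable. Unrolling \eqref{eq:hocbf} and using that $h$ has relative degree $r$ (so $L_g L_f^k h\equiv 0$ for $k\le r-2$, whence $\psi_{r-1}$ does not depend on $u$), one obtains the identity $\psi_r(x,u)=L_f^r h(x)+L_g L_f^{r-1}h(x)\,u+O(h(x))+\gamma_r(\psi_{r-1}(x))$; hence, by the definition of $K_{hocbf}$ in \eqref{eq:safe_controller}, the membership $u(x(t))\in K_{hocbf}(x(t))$ is exactly the statement $\psi_r(x(t))\ge 0$, i.e. $\dot\psi_{r-1}(x(t))\ge -\gamma_r(\psi_{r-1}(x(t)))$, for all $t\ge t_0$ in the maximal interval of existence.

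The key auxiliary step is the following: if a scalar absolutely continuous $y$ satisfies $\dot y(t)\ge -\gamma(y(t))$ for a differentiable class $\cK$ function $\gamma$ and $y(t_0)\ge 0$, then $y(t)\ge 0$ for all $t\ge t_0$. I would prove this by letting $z$ solve $\dot z=-\gamma(z)$ with $z(t_0)=y(t_0)$; \Cref{th:var_alpha} gives $z(t)\ge 0$, and the comparison lemma \cite{khalilnonlinear} gives $y(t)\ge z(t)\ge 0$ (legitimate since the differentiable $\gamma_i$ are locally Lipschitz). Equivalently, a direct contradiction argument mirroring the second half of \Cref{th:var_alpha} works: if $y(t_1)<0$, take the last zero $t^\star\in[t_0,t_1]$ of $y$; on $(t^\star,t_1]$ one has $y<0$, hence $-\gamma(y)>0$, hence $\dot y>0$, so $y$ is strictly increasing there and $y(t_1)>y(t^\star)=0$, a contradiction.

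Applying the auxiliary step with $t\mapsto\psi_{r-1}(x(t))$, $\gamma=\gamma_r$ and $x(t_0)\in C_r$ (i.e. $\psi_{r-1}(x(t_0))\ge 0$) yields $\psi_{r-1}(x(t))\ge 0$, i.e. $x(t)\in C_r$, for all $t\ge t_0$. I would then run a downward induction: assuming $\psi_k(x(t))\ge 0$ for all $t\ge t_0$, the identity $\psi_k=\dot\psi_{k-1}+\gamma_k(\psi_{k-1})$ gives $\dot\psi_{k-1}(x(t))\ge -\gamma_k(\psi_{k-1}(x(t)))$, and since $x(t_0)\in C_k$ means $\psi_{k-1}(x(t_0))\ge 0$, the auxiliary step gives $\psi_{k-1}(x(t))\ge 0$, i.e. $x(t)\in C_k$, for all $t\ge t_0$ — which is also the induction hypothesis at the next level down. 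Iterating from $k=r-1$ to $k=1$ produces $\psi_0(x(t))=h(x(t))\ge 0$ and, along the way, $x(t)\in C_i$ for every $i$, so $x(t)\in\bigcap_{i=1}^{r}C_i$ for all $t\ge t_0$; boundedness of the $\psi_i(x(t))$ on bounded time intervals rules out finite escape, so the solution is defined for all $t\ge t_0$ and $\bigcap_{i=1}^{r}C_i$ is forward invariant. The only genuinely delicate point is the auxiliary step — the passage from a differential \emph{inequality} to invariance, since \Cref{th:var_alpha} as stated covers only the exact ODE; the remainder is recursion bookkeeping plus a routine well-posedness remark.
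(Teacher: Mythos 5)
This theorem is imported from \cite{xiao2022HighOrder}; the paper gives no proof of it, so there is nothing in-paper to compare against. Your argument is correct and is essentially the standard proof of the HOCBF invariance result: the membership $u(x)\in K_{hocbf}(x)$ is precisely the differential inequality $\dot\psi_{r-1}\geq-\gamma_r(\psi_{r-1})$ along closed-loop trajectories, the auxiliary step (nonnegativity is preserved under $\dot y\geq-\gamma(y)$ with $y(t_0)\geq 0$) follows from \Cref{th:var_alpha} plus the comparison lemma or from your direct last-zero contradiction, and the downward recursion through $\psi_{r-1},\ldots,\psi_0=h$ delivers $x(t)\in\bigcap_{i=1}^{r}C_i$. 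You were also right to read the invariant set as $\bigcap_{i=1}^{r}C_i$ rather than the $\bigcup_{i=1}^{r}C_r$ printed in the statement: as written the union collapses to the single set $C_r$, and the induction genuinely needs $\psi_{k-1}(x(t_0))\geq 0$ at \emph{every} level $k$, i.e.\ the intersection, which is what \cite{xiao2022HighOrder} actually states. The only soft spot is the closing well-posedness remark --- boundedness of the scalars $\psi_i(x(t))$ does not by itself preclude finite escape of $x(t)$; the clean statement is that invariance holds on the maximal interval of existence --- but this does not affect the substance of the argument.
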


\section{Control Barrier Function Constraint for Bearing Measurements}
\label{sec:Approach}

In this section, we show how to formulate a field of view constraint as a Control Barrier Function, and the necessary manipulations to impose the constraint when only unreliable distance measurements are available.

\paragraph{Agent kinematics}
We will consider an autonomous agent moving in space modeled as a rigid body whose pose with respect to an inertial reference frame is denoted as $\agentCoordinates = (p, R) \in \real{3} \times SO(3)$, where $p \in \real{3}$ denote the position of the rigid body and $R \in SO(3)$ its rotation transforming coordinates from the body to the inertial frame.

\paragraph{Sensing model}
The agent is equipped with a visual sensor whose field of view is modeled as an infinite half-cone centered in $p$ and with total aperture $2\psi_{F}$, with $\psi_F \in \left (0, \frac{\pi}{2} \right)$ (see Fig. \ref{fig:gates}~for reference). We denote with $\cameraAxis \in \real{3}$ the orientation of the optical axis of the visual sensor in the body frame. Formally, we denote the field of view of the agent $\cF(x) \subset \real{3}$ as:
\begin{align*}
  \cF(x) = \left \{ q \in \real{2} \mid \angle(\bearing(p), R\cameraAxis)  < \psi_{F}  \right \}.
\end{align*}

\paragraph{Control task}
We suppose that a reference controller $u^*(x) : \mathcal{X} \to \real{m}$ is given, able for example to track trajectories with the use of visual features. We let ${\{ q_i \}}_{i=0}^N \subset \real{3}$ denote a set of $N$ scene features the agent needs to keep in sight during its operation, that may be useful either for localization or navigation. We use $d_i (p) = \norm{q_i - p}$ to denote the distance of point $i$ from the agent position $p$, and $\bearing_i (p) = \frac{q_i - p}{\norm{q_i - p}}$ for its associated bearing.
Let $\hat{d}_i(p, t) : \real{} \times \real{} \to \real{}$ be an estimate of the quantity $d_i(p)$ available to the agent; we then define the relative error as $\tilde{d} (p, t) = \frac{d (p)}{\hat{d} (p, t)}$.

\begin{problem}\label{problem:fov invariance}
  The goal of the agent is to compute a control input $u$ that is as close as possible to $u^*$ while keeping each point of interest inside its field of view, i.e., at every time instant $t$,
  \begin{equation}\label{eq:goal}
    q_i \in \cF(x), \quad \forall i \in \{ 1, \ldots, N\}.
  \end{equation}
\end{problem}
\begin{remark}
  In the following discussion, we will prove that the proposed approach is robust to bounded distance readings error, that is, if the multiplicative $\derr(t)$ is contained in an interval (which may be estimated from the sensor data sheet or by computing some bounds from the algorithm used to calculate the distance). 
\end{remark}

\paragraph{Candidate CBF function}
Condition \eqref{eq:goal} for each feature point $q_i$ can be equivalently characterized by a function $h_i : \real{3} \times SO(3) \to \real{}$ defined as
\begin{align}
  \label{eq:cbf}
  h_i(x) = \bearing_i^T(p) R \cameraAxis - \cos{\psi_F};
\end{align}
It can be verified that:
\begin{itemize}
\item $h_i(x) > 0$, when $q_i \in int \cF$
\item $h_i(x) = 0$ when $q_i \in \partial \cF$
\item $h_i(x) < 0$, when $q_i \not\in int \cF$
\item $h_i(x)$ is smooth anywhere except when $p = q_i$, where it is not defined.
\end{itemize}
and, as such, is a candidate Control Barrier Function. We can then define the safe sets $\cC_{h_i} \subset \real{3} \times SO(3)$ according to \eqref{eq:safe_set}.

For the sake of exposition, in the following analysis, we will consider $N=1$ and drop the subscript accordingly. At the end of this section, we will explain how to handle the general case $N>1$. Furthermore, we will drop the dependence on the agent and feature positions when referring to the bearing $\beta$ and distance function $d$ associated with the feature.

\subsection{Velocity Control}
We first analyze the case in which we have control authority over the linear and angular velocities of the rigid body. By denoting the linear velocity of the body, expressed in the inertial frame, by $v\in T_p \real{3}$ and its angular velocity, expressed in the body frame, by $\omega\in T_R SO(3)$, the equations of motion become
\begin{align*}
  \dot{p} &= v \\
  \dot{R} &= R \hatMap{\omega}
\end{align*}
Comparing with \eqref{eq:affine_dynamics}, we then have $x = \agentCoordinates$ and $u = ( v, \omega )$.

As is usual in the CBF literature, we aim to solve Problem~\ref{problem:fov invariance} by solving the following optimization problem at each time instant:
\begin{subequations}
  \begin{align}
    \min_{u \in \real{3}} & \quad \norm{u - u^*}^2 \\
    s.t. & \quad \metrica{\grad h}{u} + \gamma(h(x)) \geq 0. \label{eq:qp_single_classic}
  \end{align}
\end{subequations}
for a given class $\cK $ function $\gamma : \real{} \to \real{}$.
The constraint in \eqref{eq:qp_single_classic} expands to
\begin{align*}
  \metrica{\grad_p h}{v}
  + \metrica{\grad_R h}{\omega}
  + \gamma (h(x)) &\geq 0,
\end{align*}
which can be further expanded (see the Appendix) as
\begin{align*}
  -\frac{1}{d}\cameraAxis^T R^T \projectionM{\bearing}^T v - \bearing^T R  \hatMap{\cameraAxis} \omega + \gamma (h(x)) &\geq 0.
\end{align*}

Notice that the term that multiplies $v$ depends explicitly on the multiplicative factor $\frac{1}{d}$, which, in our assumptions, is only approximately known through the estimate $\hat{d}$. To get around the issue we propose to substitute the previous constraint with the following triplet of constraints:
\begin{equation}
  \begin{aligned}
    \label{eq:splitted1}
    -\frac{1}{\hat{d}}
    \cameraAxis^T R^T \projectionM{\bearing}^T v + c_1 \gamma(h(x))&\geq 0, \\
    - \bearing^T R \hatMap{\cameraAxis} \omega + c_2 \gamma (h(x)) &\geq 0, \\
    c_1 + c_2 = \gamma_0 &> 0.
  \end{aligned}
\end{equation}
The following theorem shows that, under appropriate assumptions, there exist coefficients $c_1,c_2\in\real{}$ such that enforcing \eqref{eq:splitted1} implies satisfaction of \eqref{eq:qp_single_classic}.

\begin{theorem}
  \label{th:split_first_order}
  Let $\gamma : \real{} \to \real{}$ be a class $\cK$ function. Given a nominal control law $u^*(x) : \real{3} \times SO(3) \to \real{3}$, and constants $\gamma_0, d_m, d_M > 0$, such that $d_m < 1 < d_M$, the control law $u$ resulting from the solution of the following quadratic program (which is assumed to be feasible):
  \begin{subequations}
    \begin{align}
      \min_{\substack{u \in \real{3} \\ c_1, c_2 \in \real{}}}
      & \quad \norm{u - u^*(x)}^2 \\
      \label{cstr: velocity}
      \subjectto          & \quad - \frac{1}{\hat{d}}\cameraAxis^T R^T \projectionM{\bearing}^T v + c_1 \gamma(h(x))\geq 0 \\
      \label{cstr: angular}
      & \quad - \bearing^T R \hatMap{\cameraAxis} \omega + c_2 \gamma(h(x)) \geq 0 \\
      \label{cstr:scale}
      & \quad c_1 + c_2 = \gamma_0  \\
      \label{cstr:positive}
      & \quad \frac{\gamma_0}{1 - d_{M}} < c_2 < \frac{\gamma_0}{1 - d_{m}}
    \end{align}
  \end{subequations}

  renders the set $\safeSet$ forward invariant whenever $\derr \in [d_m, d_M]$.
\end{theorem}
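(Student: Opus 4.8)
The plan is to reduce the statement to a single Control Barrier Function inequality for $h$ that no longer involves the unknown $\derr$. Concretely, I will show that any feasible point of the quadratic program forces the true derivative $\dot h$ along the closed-loop kinematics to satisfy $\dot h \ge -\bar\gamma(h(x))$ on $\safeSet$, where $\bar\gamma$ is a class $\cK$ function built only from $\gamma$, $\gamma_0$, $d_m$, $d_M$ (in particular independent of the realized error $\derr$); forward invariance of $\safeSet$ then follows from the comparison argument already used in the proof of \cref{th:var_alpha} --- equivalently, from \cref{th:HOCBF} applied with relative degree $r=1$ --- as long as $\derr(t)\in[d_m,d_M]$.

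First I would fix a time instant and a feasible tuple $(v,\omega,c_1,c_2)$, and use the expansion derived in the Appendix: for $\dot p = v$, $\dot R = R\hatMap{\omega}$,
\[
\dot h \;=\; -\tfrac{1}{d}\,\cameraAxis^{T} R^{T} \projectionM{\bearing}^{T} v \;-\; \bearing^{T} R\,\hatMap{\cameraAxis}\,\omega .
\]
Abbreviating $\sigma \defeq -\tfrac{1}{\hat{d}}\cameraAxis^{T} R^{T} \projectionM{\bearing}^{T} v$ and $\tau \defeq -\bearing^{T} R\,\hatMap{\cameraAxis}\,\omega$, and substituting $d = \derr\,\hat{d}$, this reads $\dot h = \sigma/\derr + \tau$, while constraints \eqref{cstr: velocity} and \eqref{cstr: angular} say exactly $\sigma \ge -c_1\gamma(h(x))$ and $\tau \ge -c_2\gamma(h(x))$.

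The heart of the argument is to bound $\sigma/\derr$ uniformly in $\derr\in[d_m,d_M]$. Since $\derr\mapsto\sigma/\derr$ is monotone (it is linear in $1/\derr$), its minimum over $[d_m,d_M]$ is attained at an endpoint, so $\sigma/\derr \ge \min\{\sigma/d_m,\ \sigma/d_M\}$ for every admissible $\derr$. Dividing $\sigma \ge -c_1\gamma(h(x))$ by $d_m$ and by $d_M$, using $\gamma(h(x))\ge 0$ on $\safeSet$, and adding the bound on $\tau$, one obtains
\[
\dot h \;\ge\; -\Big(\max\big\{\tfrac{c_1}{d_m},\tfrac{c_1}{d_M}\big\} + c_2\Big)\,\gamma(h(x)) \;=:\; -\kappa\,\gamma(h(x)).
\]
With $c_1+c_2=\gamma_0$ from \eqref{cstr:scale}, elementary algebra gives $\tfrac{c_1}{d_m}+c_2>0 \iff c_2 < \tfrac{\gamma_0}{1-d_m}$ and $\tfrac{c_1}{d_M}+c_2>0 \iff c_2 > \tfrac{\gamma_0}{1-d_M}$; that is, the two halves of \eqref{cstr:positive} are precisely the conditions making each of $\tfrac{c_1}{d_m}+c_2$ and $\tfrac{c_1}{d_M}+c_2$ positive, hence $\kappa>0$. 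Moreover \eqref{cstr:scale}--\eqref{cstr:positive} confine $(c_1,c_2)$ to a bounded set, so $\kappa\le\bar\kappa$ for some $\bar\kappa>0$ depending only on $\gamma_0,d_m,d_M$.

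Setting $\bar\gamma\defeq\bar\kappa\,\gamma$ (a class $\cK$ function), the estimate above yields $\dot h \ge -\kappa\gamma(h(x)) \ge -\bar\gamma(h(x))$ at every point of $\safeSet$ and for every admissible $\derr$, so $h$ satisfies the relative-degree-one CBF condition with $\bar\gamma$ along the closed loop; the comparison lemma --- as in the proof of \cref{th:var_alpha}, the solution of $\dot y=-\bar\gamma(y)$ with $y(t_0)=h(x(t_0))\ge 0$ stays nonnegative and lower-bounds $h(x(t))$ --- or \cref{th:HOCBF} with $r=1$, then gives $h(x(t))\ge 0$ for all $t\ge t_0$, i.e.\ forward invariance of $\safeSet$. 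I expect the main obstacle to be this central estimate: recognizing that the dependence on the unknown $\derr$ is eliminated by evaluating $\sigma/\derr$ at the two extreme admissible values of $\derr$, and that constraint \eqref{cstr:positive} is exactly what keeps the resulting coefficient $\kappa$ strictly positive, so that $\bar\gamma$ is genuinely a class $\cK$ function. A secondary, routine point is the local existence and Lipschitz regularity of the closed-loop solution, which follows from the standard regularity of the QP solution map under the standing feasibility assumption.
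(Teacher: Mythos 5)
Your proof is correct and follows essentially the same route as the paper: rescale the velocity constraint by the unknown ratio $\derr$, add it to the angular constraint to recover a single scalar CBF inequality with effective gain $c_2 + c_1/\derr$, and observe that \eqref{cstr:scale}--\eqref{cstr:positive} are exactly the conditions making that gain positive for every $\derr \in [d_m, d_M]$. The only (harmless) difference is that you further bound the gain by its values at the endpoints $d_m, d_M$ to obtain a fixed, time-invariant class $\cK$ comparison function $\bar\kappa\,\gamma$, whereas the paper keeps the time-varying gain $\bigl(c_2(t) + c_1(t)/\derr(t)\bigr)\gamma$ and invokes its Proposition~\ref{th:var_alpha} together with the comparison lemma.
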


\begin{remark}
  At a high level, the idea of the proof is to show that imposing \eqref{cstr: velocity} and \eqref{cstr: angular} is equivalent to imposing the CBF constraint with a time-varying class $\cK$ function $\bar{\gamma}$ which is different from $\gamma$. The constraint \eqref{cstr:positive} is then necessary to ensure that the CBF constraint from $\bar{\gamma}$ is valid. Finally, constraint \eqref{cstr:scale} can be interpreted as fixing the scale for $\bar{\gamma}$.
\end{remark}

\begin{proof}{}
  We multiply \eqref{cstr: velocity} by $\frac{1}{\derr}$ and obtain, since $\derr > 0$,
  \begin{align*}
    -\frac{1}{d}\cameraAxis^T R^T \projectionM{\bearing}^T v + \frac{c_1}{\derr} \gamma (h(x))\geq 0;
  \end{align*}
  summing with \eqref{cstr: angular} we obtain
  \begin{align*}
    \metrica{\grad h}{u} + \left ( c_2 + \frac{c_1}{\derr}  \right) \gamma (h(x)) \geq 0.
  \end{align*}

  To ensure forward invariance, we need $c_2 + \frac{c_1}{\derr} > 0$. We can rewrite this condition as a function of $c_2$ alone by using constraint \eqref{cstr:positive}:
  \begin{align*}
    c_2 + \frac{c_1}{\derr} = c_2 + \frac{\gamma_0 - c_2}{\derr} = \frac{c_2(1-\derr) + \gamma_0}{\derr} > 0.
  \end{align*}
  For the above to hold, we need to study how the function changes for various values of $\derr$. One can easily verify that the above condition can be rewritten as:
  \begin{align*}
    \begin{cases}
      c_2 \in \real{} &\textrm{if } \derr = 1, \\
      c_2 < \frac{\gamma_0}{1 - \derr} &\textrm{if }  0 < \derr < 1, \\
      c_2 > \frac{\gamma_0}{1 - \derr}& \textrm{if } \derr > 1,
    \end{cases}
  \end{align*}
  which is equivalent to constraint \eqref{cstr:positive}.

  It follows that, if $\derr \in [d_m, d_M]$, then $\bar{\gamma} (h, t) = \left ( c_2(t) + \frac{c_1(t)}{\derr(t)} \right ) \gamma(h)$ is a valid CBF function; from \Cref{th:var_alpha} we then have that, if $h(0) > 0$, the solution of the differential equation $\dot{h} = -\bar{\gamma}(h, t) h$ is strictly positive. From the comparison lemma \cite{khalilnonlinear}, we then conclude that the set $\safeSet$ is forward invariant.
\end{proof}

\subsection{Acceleration Control}
Let us now analyze the case in which the agent dynamics is described by a second-order system. Denoting the linear acceleration of the body in the inertial frame by $a \in T_p \real{3}$, and its angular acceleration in the body frame by $\alpha \in T_R SO(3)$, the equations of motion become
\begin{align*}
  \dot{p} &= v \\
  \dot{v} &= a \\
  \dot{R} &= R \hatMap{\omega} \\
  \dot{\omega} &= \alpha
\end{align*}

Comparing with \eqref{eq:affine_dynamics}, we then have $x = (p, R, v, \omega)$ and $u = (a, \alpha)$.
According to \eqref{eq:safe_controller}, for given class $\cK$ functions $\gamma_1, \gamma_2 : \real{} \to \real{}$, the constraint to satisfy in this case is
{\footnotesize\begin{equation*}
  L_f^2 h(x) + L_g L_f h(x) u +
  \gamma_1'(h) L_f h(x) + \gamma_2 (L_f h + \gamma_1(h(x))) \geq 0
\end{equation*}}
which expands to
\begin{align}
  \label{eq:second_order_contraint}
  \metrica{\grad_p h}{a} +
  \metrica{\grad_R h}{\alpha} &+ \nonumber \\
  \metrica{\hess_{p} h [v]}{v} +
  \metrica{\hess_{R} h) [\omega]}{\omega } &+ \nonumber \\
  \metrica{\hess_{p} h [v]}{\omega} +
  \metrica{\hess_{R} h [\omega]}{v} &+ \\
  \gamma_1'(h) \left( \metrica{\grad_p h}{v} +
  \metrica{\grad_R h}{\omega} \right) + \nonumber \\
  \gamma_2 (\metrica{\grad_p h}{v} +
  \metrica{\grad_R h}{\omega} + \gamma_1(h(x))) &\geq 0 \nonumber
\end{align}

Similarly to \Cref{th:split_first_order}, we split  \eqref{eq:second_order_contraint} to obtain two constraints, one that collects all the terms that contain $d$ in the denominator, and one that collects all the terms that do not contain $d$. However, the term $\metrica{\hess_p h [v]}{v}$ contains $d^2$ in the denominator, as it expands (see the Appendix) to
\begin{align}
  \label{eq:hessian_velocity}
  v^T\frac{2(\cameraAxis^T R^T\bearing) I - 3 (\cameraAxis^T R^T\bearing) P(\bearing)  - R \cameraAxis \bearing^T - \bearing \cameraAxis^T R^T}{d^2}  v.
\end{align}
As a consequence, this term cannot belong to either of the two constraints. To deal with this problem, we decompose the velocity $v$ along a component parallel to the bearing, $v_\parallel = \metrica{\bearing}{v}\bearing$, and one perpendicular to it, $v_\perp = \projectionM{\bearing} v$, so that $v = v_\parallel + v_\perp$. Moreover, since $\dot{\bearing} = -\frac{1}{d} \projectionM{\bearing}v$, we have
\begin{align}\label{eq:vperp}
  v_{\perp} = -d \dot{\bearing}.
\end{align}

Now, due to the linearity of the Hessian and the metric, we can rewrite \eqref{eq:hessian_velocity} as follows:\begin{multline}
  \label{eq:hessian_decomposed}
  \metrica{\hess_p h [v]}{v} = \metrica{\hess_p h [v_\perp + v_\parallel]}{v_\perp + v_\parallel} = \\
  = \metrica{\hess_p h [v_\perp]}{v_\perp} + \metrica{\hess_p h [v_\perp]}{v_\parallel} \\
  + \metrica{\hess_p h [v_\parallel]}{v_\perp} + \metrica{\hess_p h [v_\parallel]}{v_\parallel}.
\end{multline}
Expanding the computations, we can see that
$\metrica{\hess_p h [v_\parallel]}{v_\parallel} = 0 \nonumber$, and we can use \eqref{eq:vperp} to simplify a distance factor from the remaining terms containing $d^2$ at the denominator.

As a consequence, under the assumption that $\dot{\bearing}$ is measurable (e.g., through numerical differentiation or the use of an observer), we can compute the various terms in \eqref{eq:hessian_decomposed} without requiring the square of the distance. This leads us to the following claim.

\begin{theorem}
  \label{th:split_second_order}
  Let $\gamma_1, \gamma_2 : \real{} \to \real{}$ be class $\cK$ functions. Without loss of generality, assume $\gamma_1'(0) \geq \gamma_2'(0)$ (otherwise swap the role of the two functions). Define $K = \frac{4\gamma_1'(0) \gamma_2'(0)}{{(\gamma_1'(0) + \gamma_2'(0))}^2}  $. Given a nominal control law $u^*(x) : \real{3} \times SO(3) \to \real{3}$ and constants $\gamma_0, d_m, d_M > 0$, such that $\gamma_0 > K$ and $d_m < 1 < d_M$, then there exists $M > 0$ such that the control law resulting from the solution of the following quadratic program
  \begin{subequations}
    \label{eq:second_op}
    \begin{align}
      \min_{\substack{u \in \real{3} \\ c_1, c_2 \in \real{}}}
      & \quad \norm{u - u^*(x)}^2 \\
      \textrm{s.t. }               & \quad \derr \left(\metrica{grad_p h}{a} +  2 \metrica{\hess_p [v_\perp]}{v_\parallel} \right) + \\
      & \qquad c_1 \left(\gamma'_1 L_f h(x) +\gamma_2 (L_f h(x) + \gamma_1 (h(x)) \right) \geq 0 \nonumber\\
      & \quad \metrica{grad_R h}{\alpha} + \metrica{\hess_R h [\omega]}{\omega} + \\
      & \qquad 2 \metrica{\hess_R h [\omega]}{v}  + \metrica{\hess_p h [v_{\perp}]}{v_\perp} + \nonumber\\
      & \qquad c_2 \left(\gamma'_1 L_f h(x) +\gamma_2 (L_f h(x) + \gamma_1 (h(x)) \right) \geq M \nonumber\\
      & \quad c_1 + c_2 = \gamma_0 \label{eq:scale second}\\
      & \quad \frac{Kd_M - \gamma_0}{d_M - 1} \leq c_2 \leq \frac{Kd_m - \gamma_0}{d_m - 1}
    \end{align}
  \end{subequations}
  renders the set $\safeSet$ forward invariant whenever $\derr \in [d_m, d_M]$.
\end{theorem}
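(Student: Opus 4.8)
The plan is to follow the structure of the proof of \Cref{th:split_first_order}: fold the two inequality constraints of \eqref{eq:second_op} into a single one, recognize it as an HOCBF-type constraint for $h$ in which the class-$\cK$ terms have been rescaled by a positive, time-varying factor $\bar\lambda(t)$, and then close the argument with \Cref{th:var_alpha} and the comparison lemma.

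First I would sort the terms of \eqref{eq:second_order_contraint} according to their residual dependence on $d$. Using $v=v_{\parallel}+v_{\perp}$, the identity $\metrica{\hess_p h[v_{\parallel}]}{v_{\parallel}}=0$ together with the vanishing of the (full) position block of the Hessian on the bearing direction, the self-adjointness of the Hessian, and the substitution \eqref{eq:vperp} ($v_{\perp}=-d\dot{\bearing}$), every term lands in exactly one of two groups: the terms retaining a $\frac1d$ factor, namely $\metrica{\grad_p h}{a}$ and $2\metrica{\hess_p h[v_{\perp}]}{v_{\parallel}}$, and the terms with no residual distance, namely $\metrica{\grad_R h}{\alpha}$, $\metrica{\hess_R h[\omega]}{\omega}$, the collapsed cross Hessian $2\metrica{\hess_R h[\omega]}{v}$, and $\metrica{\hess_p h[v_{\perp}]}{v_{\perp}}$. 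Multiplying the first group by $\derr$ turns it into the computable quantity of the first constraint of \eqref{eq:second_op} (it replaces $d$ by $\hat d$), so the QP is well posed; multiplying that constraint by $\frac1\derr>0$ and adding the second, with $c_1+c_2=\gamma_0$, gives, writing $\psi_1=L_f h+\gamma_1(h)$ and $\bar\lambda(t)=c_2(t)+\frac{c_1(t)}{\derr(t)}$,
\begin{equation*}
  L_f^2 h(x)+L_g L_f h(x)\,u+\bar\lambda(t)\bigl(\gamma_1'(h)L_f h(x)+\gamma_2(\psi_1)\bigr)\ \geq\ M .
\end{equation*}

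Next I would bound $\bar\lambda(t)$. Since $\bar\lambda=c_2+c_1/\derr$ is monotone in $\derr$, on $[d_m,d_M]$ its minimum is attained at an endpoint, and a short computation shows the last constraint of \eqref{eq:second_op} is equivalent to $\bar\lambda(d_m)\geq K$ and $\bar\lambda(d_M)\geq K$; hence $\bar\lambda(t)\geq K$ for every admissible $\derr$, and the same computation shows the window for $c_2$ is nonempty exactly because $\gamma_0>K$ (this, and the normalization $\gamma_1'(0)\geq\gamma_2'(0)$, are where $K$ enters). Substituting $L_f h=\psi_1-\gamma_1(h)$ and using $L_f^2 h+L_g L_f h\,u=\dot\psi_1-\gamma_1'(h)L_f h$ along the closed loop, the combined inequality becomes
\begin{equation*}
  \dot\psi_1\ \geq\ \bigl(M-(1-\bar\lambda)\gamma_1'(h)\gamma_1(h)\bigr)\ -\ \bigl[(\bar\lambda-1)\gamma_1'(h)\psi_1+\bar\lambda\gamma_2(\psi_1)\bigr].
\end{equation*}

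The last step rests on a geometric observation: on $\safeSet$ one has $h=\bearing^T R\cameraAxis-\cos\psi_F\in[0,1-\cos\psi_F]$, so $\gamma_1'(h)\gamma_1(h)$ is uniformly bounded; thus there exists $M$ (e.g. $M\geq(1-K)\sup_{s\in[0,1-\cos\psi_F]}\gamma_1'(s)\gamma_1(s)$) making the residual term nonnegative for all admissible $\derr$. The bracketed term vanishes at $\psi_1=0$ and, because $\bar\lambda$ and $h$ range over compact sets, is bounded uniformly in $t$ by a fixed continuous function of $\psi_1$; it can therefore be dominated by a time-independent, locally Lipschitz class-$\cK$ function $\bar\gamma_2$, so that $\dot\psi_1\geq-\bar\gamma_2(\psi_1)$. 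Assuming $\psi_1(t_0)\geq0$ (i.e. $x(t_0)\in C_2$), \Cref{th:var_alpha} and the comparison lemma give $\psi_1(t)\geq0$ for all $t$; then $L_f h=\psi_1-\gamma_1(h)\geq-\gamma_1(h)$, and applying the same two tools once more yields $h(t)\geq0$ whenever $h(t_0)\geq0$, i.e. $\safeSet$ is forward invariant. I expect the genuinely delicate parts to be the term-sorting bookkeeping — verifying that after the $v_{\parallel}/v_{\perp}$ split and $v_{\perp}=-d\dot{\bearing}$ the only surviving distance-dependent terms are the two placed in the first constraint — and the explicit construction of $M$ and of the majorant $\bar\gamma_2$; the rest is a routine repetition of the first-order argument.
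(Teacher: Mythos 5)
Your argument is sound and reaches the right conclusion, but the core step is genuinely different from the paper's. You agree with the paper up to the point of summing the two inequality constraints (after multiplying the first by $1/\derr$) to obtain $\ddot h + \bar\lambda\,(\gamma_1'(h)\dot h + \gamma_2(\psi_1)) \geq M$ with $\bar\lambda = c_2 + c_1/\derr$, and your endpoint analysis showing that the $c_2$-window is exactly $\bar\lambda \geq K$ on $[d_m,d_M]$ (nonempty because $\gamma_0 > K$) matches the paper. From there the paper proceeds structurally: Lemmata~\ref{lemma:alpha12} and~\ref{lemma:alpha12 classK} solve a functional equation to produce \emph{new, time-varying} class-$\cK$ functions $\alpha_1,\alpha_2$ for which the rescaled expression is exactly a second-order HOCBF condition; the specific value $K = 4\gamma_1'(0)\gamma_2'(0)/G^2$ is precisely the threshold making the discriminant in \eqref{eq:alpha discriminant} nonnegative (so that $\alpha_{1,2}'(0)$ are real), the normalization $\gamma_1'(0)\geq\gamma_2'(0)$ guarantees $\alpha_1',\alpha_2'>0$, and $M$ is spent on absorbing $\partial\alpha_1/\partial t$, which implicitly requires $\dot{\derr}$ bounded. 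You instead keep the original $\psi_1 = \dot h + \gamma_1(h)$, substitute $\dot h = \psi_1 - \gamma_1(h)$, and dominate the resulting perturbation using compactness of the range of $h$ and boundedness of $\bar\lambda$, spending $M$ on the offset $(1-\bar\lambda)\gamma_1'(h)\gamma_1(h)$ and majorizing the remainder by a fixed class-$\cK$ function before invoking \Cref{th:var_alpha} and the comparison lemma twice. Your route is more elementary and avoids the boundedness assumption on $\dot{\derr}$, but it does not explain the particular form of $K$ (any positive lower bound on $\bar\lambda$ would serve in your estimates, and the hypothesis $\gamma_1'(0)\geq\gamma_2'(0)$ is never used), whereas in the paper these two hypotheses are exactly what make the reconstructed $\alpha_1,\alpha_2$ exist and be class $\cK$. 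Both arguments share the same unstated caveat that forward invariance of $\safeSet$ additionally requires $\psi_1(t_0)\geq 0$, as in \Cref{th:HOCBF}.
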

  The terms in \eqref{eq:second_op} are given by (see the Appendix for the full derivation):
  \begin{align*}
    \metrica{\hess_p h [v_\perp]}{v_\perp} &= -\left(\bearing^T R \cameraAxis \right)\left (\dot{\bearing}^T \dot{\bearing} \right) \nonumber \\
    \metrica{\hess_p h [v_\perp]}{v_\parallel} &= \metrica{\hess_p h [v_\parallel]}{v_\perp} =
                                                 \frac{(v^T \bearing) \dot{\bearing}^T P(\bearing) R \cameraAxis}{d} \\
    \grad_p h(x) &= - \frac{\projectionM{\bearing} R \cameraAxis}{d} \\
    \grad_R h(x) &= -\bearing^T R \hatMap{\cameraAxis} \\
    \metrica{\hess_R h[\omega]}{v} &= \metrica{\hess_p h[v]}{\omega}= v^T \frac{\projectionM{\bearing} R \hatMap{\cameraAxis}}{d} \omega \\
    \metrica{\hess_R h [\omega]}{\omega}&= \omega^T \hatMap{R^T \bearing} \hatMap{\cameraAxis} \omega.
  \end{align*}
  \begin{remark}
      At a high level, the proof follows the same reasoning of Theorem \ref{th:split_first_order}. However, it also needs to take into account the time-varying nature of $\gamma_1$ when defining the HOCBF constraint $\psi_2$; this, in turn, depends on $\dot{\tilde{d}}$, which needs to be bounded by $M$.
    \end{remark}

    Before considering the proof of \Cref{th:split_second_order}, we prove the following intermediate results:
    \begin{lemma}\label{lemma:alpha12}
  Let $L = \left (c_2 + \frac{c_1}{\derr} \right )$. There exists two functions $\alpha_1, \alpha_2$ such that
  \begin{equation}\label{eq:alphas equality}
    \alpha_1'(h) \dot{h} + \alpha_2( \dot{h} + \alpha_1(h)) =
    L \left (\gamma_1'(h) \dot{h} + \gamma_2( \dot{h} + \gamma_1(h))\right)
  \end{equation}
    \end{lemma}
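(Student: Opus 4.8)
The plan is to look for $\alpha_1,\alpha_2$ inside the same family as $\gamma_1,\gamma_2$, namely as linear (class $\cK$) functions $\alpha_1(s)=a_1 s$ and $\alpha_2(s)=a_2 s$, and to solve for the slopes $a_1,a_2$. (As usual in the HOCBF construction it is the slopes $\gamma_i'(0)$ that enter, so one reads $\gamma_i'(h)\equiv\gamma_i'(0)$ and $\gamma_i(h)=\gamma_i'(0)h$; this linear reading is also what makes an \emph{exact} identity of the form \eqref{eq:alphas equality} possible — see the final remark.) Under this ansatz the left-hand side of \eqref{eq:alphas equality} collapses to $(a_1+a_2)\dot h+a_1a_2\,h$ and the right-hand side to $L(\gamma_1'(0)+\gamma_2'(0))\dot h+L\gamma_1'(0)\gamma_2'(0)\,h$. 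Since the identity must hold at every state, and for each admissible value of $h$ the quantity $\dot h$ can take any value in $\real{}$, matching the coefficient of $\dot h$ and the constant term shows that \eqref{eq:alphas equality} is equivalent to the pair of scalar equations
\[
a_1+a_2=L\bigl(\gamma_1'(0)+\gamma_2'(0)\bigr),\qquad a_1a_2=L\gamma_1'(0)\gamma_2'(0).
\]

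Thus $a_1,a_2$ must be the two roots of $\lambda^2-L(\gamma_1'(0)+\gamma_2'(0))\lambda+L\gamma_1'(0)\gamma_2'(0)=0$. I would then argue that, under the hypotheses of \Cref{th:split_second_order}, these roots are real and positive: the discriminant is nonnegative exactly when $L(\gamma_1'(0)+\gamma_2'(0))^2\ge 4\gamma_1'(0)\gamma_2'(0)$, i.e.\ when $L\ge K$ with $K$ as defined there (this inequality is checked in the next paragraph), and in that regime Vieta's formulas give $a_1+a_2>0$ and $a_1a_2>0$ since $\gamma_i'(0)>0$ and $L>0$; hence $a_1,a_2>0$ and $\alpha_1(s)=a_1s$, $\alpha_2(s)=a_2s$ are genuine class $\cK$ functions. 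Substituting back into \eqref{eq:alphas equality} verifies it by construction, which proves the lemma.

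The step I expect to be the crux — and the reason the hypotheses of \Cref{th:split_second_order} take their particular form — is showing that $L=c_2+c_1/\derr$ indeed satisfies $L\ge K$ for \emph{every} admissible $\derr\in[d_m,d_M]$. Using $c_1+c_2=\gamma_0$ one rewrites $L=\bigl(c_2(\derr-1)+\gamma_0\bigr)/\derr$, so $L\ge K\iff c_2(\derr-1)\ge K\derr-\gamma_0$; a case split on whether $\derr$ is greater than, equal to, or less than $1$ (at $\derr=1$ one simply has $L=\gamma_0>K$), together with the monotonicity of $\derr\mapsto(K\derr-\gamma_0)/(\derr-1)$ — whose derivative $(\gamma_0-K)/(\derr-1)^2$ is positive precisely because $\gamma_0>K$ — shows that ``$L\ge K$ for all $\derr\in[d_m,d_M]$'' is equivalent to the sandwich $\tfrac{Kd_M-\gamma_0}{d_M-1}\le c_2\le\tfrac{Kd_m-\gamma_0}{d_m-1}$ enforced in the quadratic program \eqref{eq:second_op}. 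Everything else is routine algebra.

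Finally, a caveat worth recording: the construction is exact only under the linear reading above. For a genuinely nonlinear $\gamma_1$, matching the $\dot h$-coefficient in \eqref{eq:alphas equality} forces $\alpha_1'\equiv L\gamma_1'$ (hence $\alpha_1=L\gamma_1$), while matching the argument of $\gamma_2$ forces $\alpha_1=\gamma_1$, and these are compatible only when $\gamma_1$ is linear; so \eqref{eq:alphas equality} (and hence \Cref{th:split_second_order}) should be understood with $\gamma_1,\gamma_2$ linear, or with each $\gamma_i$ replaced by its linearization at the origin.
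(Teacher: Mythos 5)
Your construction is correct and lands on exactly the same quadratic as the paper, but it gets there by a genuinely different (and more honest) route. The paper does not restrict to linear functions: it derives \emph{necessary} conditions by evaluating \eqref{eq:alphas equality} on the slices $h=0$ and $\dot h=0$, obtaining the implicit relations $\alpha_1=\alpha_2^{-1}\compose\gamma_2\compose\gamma_1$ and $\alpha_2(z)=L(\gamma_1'(0)z+\gamma_2(z))-\alpha_1'(0)z$ in \eqref{eq:alphas}, and then differentiates these at the origin to get $\alpha_{1}'(0),\alpha_2'(0)$ as the two roots of the same characteristic polynomial $\lambda^2-LG\lambda+L\gamma_1'(0)\gamma_2'(0)=0$ that you obtain from Vieta's formulas (compare \eqref{eq:alpha discriminant} with your $a_1,a_2$); the realness condition $L\geq K$ and the positivity of the roots are deferred to \Cref{lemma:alpha12 classK}, and the translation of $L\geq K$ into the $c_2$ sandwich is deferred to the proof of \Cref{th:split_second_order}, so part of your write-up duplicates material the paper keeps separate. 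What your approach buys is a construction that is \emph{verified}, not just necessary: with $\alpha_i(s)=a_is$ the identity \eqref{eq:alphas equality} holds by direct substitution, and positivity of $a_1,a_2$ follows from Vieta alone, without the paper's extra hypothesis $\gamma_1'(0)\geq\gamma_2'(0)$. What it gives up is generality in $\gamma_1,\gamma_2$ --- but your closing caveat correctly identifies that this generality is illusory: the paper never checks that the functions defined by \eqref{eq:alphas} satisfy \eqref{eq:alphas equality} for \emph{all} pairs $(h,\dot h)$ rather than only on the two slices used to derive them, and a second-derivative argument (differentiate \eqref{eq:alphas equality} twice in $\dot h$ and once in $h$) shows that an exact identity forces $\gamma_1$ to be linear whenever $L\neq 1$ and $\gamma_2$ is not quadratic. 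So the paper's proof of the lemma as stated has a sufficiency gap for nonlinear $\gamma_i$, and your ``linear reading'' is not merely a convenience but essentially the regime in which the lemma is actually true; it would be worth making that restriction explicit in the statement rather than leaving it as a remark.
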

    \begin{proof}{}
  Since \eqref{eq:alphas equality} must hold for each $h$ and $\dot{h}$, it must be true also for the special cases where $h = 0$ or $\dot{h}=0$. Remembering that, from the properties of class-$\cK$ functions, $\gamma_1(0)=\alpha_1(0)=0$, eq. \eqref{eq:alphas equality} implies
  \begin{align}
      h = 0 &\Rightarrow L \left( \gamma_1'(0) \dot{h} + \gamma_2(\dot{h}) \right) = \alpha_1'(0) \dot{h} + \alpha_2(\dot{h}),
      \\
      \dot{h} = 0 &\Rightarrow L \gamma_2(\gamma_1(h)) = \alpha_2(\alpha_1(h)).
  \end{align}

  From the above, the following must hold for any $z \in \real{}$:
  \begin{subequations}\label{eq:alphas}
    \begin{align}
      \alpha_1(z) &= \alpha_2^{-1}(\gamma_2(\gamma_1(z))) \\
      \alpha_2(z) &= L \left( \frac{\de \gamma_1}{\de h}(0) z + \gamma_2(z) \right) - \alpha_1'(0) z
    \end{align}
  \end{subequations}
  This provides an expression of $\alpha_1$ as a function of $\alpha_2$, and of $\alpha_2$ as a function of the scalar $\alpha_1'(0)$.

  To compute $\alpha_1'(0)$ (together with $\alpha_2'(0)$), let us take the derivative of both of the equations in \eqref{eq:alphas} by their argument:
  \begin{subequations} \label{eq:alpha derivative}
  \begin{align} 
    \alpha_1' (z) &= \frac{\gamma_2'(\gamma_1(z))\gamma_1'(z)}{\alpha_2'(\alpha_2^{-1}(\gamma_2(\gamma_1(z)))} \\
    \alpha_2'(z) &= L (\gamma_1'(0) + \gamma_2'(z)) - \alpha_1'(0)
  \end{align}
  \end{subequations}
  Let us define $G = \gamma_1'(0) + \gamma_2'(0)$. We then obtain:
  \begin{equation}\label{eq:alpha discriminant}
    \alpha_{1,2}'(0) = \frac{LG \mp \sqrt{L^2 G^2 - 4 L \gamma_1'(0) \gamma_2'(0)} }{2}
  \end{equation}
\end{proof}

\begin{lemma}\label{lemma:alpha12 classK}
 If $L\geq K$ and $\gamma_1'(0)\geq \gamma_2'(0)$, then the two functions $\alpha_1$, $\alpha_2$ from \Cref{lemma:alpha12} are class $\cK$.
\end{lemma}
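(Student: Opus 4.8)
The plan is to read off, from the closed forms derived in \Cref{lemma:alpha12}, that $\alpha_1$ and $\alpha_2$ each satisfy the two defining properties of a class $\cK$ function — vanishing at the origin and strict monotonicity — with the hypotheses $L\geq K$ and $\gamma_1'(0)\geq\gamma_2'(0)$ entering only through one elementary sign check. First I would note what $L\geq K$ provides: writing $G=\gamma_1'(0)+\gamma_2'(0)$, the radicand in \eqref{eq:alpha discriminant} equals $L\bigl(LG^2-4\gamma_1'(0)\gamma_2'(0)\bigr)$, which is nonnegative exactly when $L\geq 4\gamma_1'(0)\gamma_2'(0)/G^2=K$; hence $\alpha_1'(0)$ and $\alpha_2'(0)$ are well-defined reals, and since they are the two roots of $t^2-LGt+L\gamma_1'(0)\gamma_2'(0)=0$, whose sum $LG$ and product $L\gamma_1'(0)\gamma_2'(0)$ are both nonnegative, both roots are nonnegative (and $\alpha_2'(0)=LG-\alpha_1'(0)>0$).

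Next I would dispatch $\alpha_2$. By \eqref{eq:alphas}, $\alpha_2(z)=L\gamma_2(z)+\bigl(L\gamma_1'(0)-\alpha_1'(0)\bigr)z$, so $\alpha_2(0)=0$ since $\gamma_2(0)=0$; and since $L>0$ and $\gamma_2\in\cK$, the term $L\gamma_2(z)$ is strictly increasing, so it suffices that the linear coefficient $L\gamma_1'(0)-\alpha_1'(0)$ be $\geq 0$. Substituting the minus branch $\alpha_1'(0)=\tfrac12\bigl(LG-\sqrt{L^2G^2-4L\gamma_1'(0)\gamma_2'(0)}\bigr)$ and $G=\gamma_1'(0)+\gamma_2'(0)$ gives
\begin{align*}
  L\gamma_1'(0)-\alpha_1'(0)=\tfrac12\Bigl(L\bigl(\gamma_1'(0)-\gamma_2'(0)\bigr)+\sqrt{L^2G^2-4L\gamma_1'(0)\gamma_2'(0)}\Bigr),
\end{align*}
which is $\geq 0$ precisely because $\gamma_1'(0)\geq\gamma_2'(0)$ and the radicand is nonnegative. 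Hence $\alpha_2\in\cK$; in particular it is a continuous strictly increasing bijection onto its range, so $\alpha_2^{-1}$ is well-defined there and is itself class $\cK$.

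For $\alpha_1$ I would use \eqref{eq:alphas}, which writes $\alpha_1$ as $\alpha_2^{-1}$ composed with the map $z\mapsto L\gamma_2(\gamma_1(z))$. Since $L>0$ and $\gamma_1,\gamma_2\in\cK$, that inner map is class $\cK$, and post-composing with the class $\cK$ function $\alpha_2^{-1}$ preserves this, so $\alpha_1(0)=0$ and $\alpha_1$ is strictly increasing, i.e.\ $\alpha_1\in\cK$. The one technical point to settle is that $\alpha_1$ is well-defined, i.e.\ that $L\gamma_2(\gamma_1(z))$ lies in the range of $\alpha_2$ for every $z$; this follows because $\alpha_2(\gamma_1(z))-L\gamma_2(\gamma_1(z))=\bigl(L\gamma_1'(0)-\alpha_1'(0)\bigr)\gamma_1(z)$ has the same sign as $z$, so $L\gamma_2(\gamma_1(z))$ is always trapped between $\alpha_2(0)=0$ and $\alpha_2(\gamma_1(z))$, and the intermediate value theorem supplies a preimage.

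I expect the only genuinely delicate step to be the sign computation for $L\gamma_1'(0)-\alpha_1'(0)$: one must keep straight which root of \eqref{eq:alpha discriminant} is $\alpha_1'(0)$ and which is $\alpha_2'(0)$, and recognize that $L\geq K$ is exactly the condition keeping the square root real. Once that is in place, the remainder is the routine closure of the class $\cK$ functions under sums, compositions, and inversion, so I anticipate no further obstacles.
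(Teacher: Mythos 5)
Your proposal is correct and follows essentially the same route as the paper: realness of $\alpha_{1}'(0),\alpha_{2}'(0)$ from $L\geq K$, the choice of the minus branch for $\alpha_1'(0)$, and the reduction of the monotonicity of $\alpha_2$ to the sign of $L\gamma_1'(0)-\alpha_1'(0)$, which $\gamma_1'(0)\geq\gamma_2'(0)$ guarantees. You supply a few details the paper leaves implicit (nonnegativity of both roots, well-definedness of $\alpha_1=\alpha_2^{-1}\circ(L\,\gamma_2\circ\gamma_1)$ via the range of $\alpha_2$), and you argue monotonicity of $\alpha_1$ by closure of class $\cK$ under composition and inversion rather than by the paper's observation that $\alpha_1'$ and $\alpha_2'$ share a sign, but these are presentational rather than substantive differences.
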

\begin{proof}{}
  Using the condition $L \geq K$ in \eqref{eq:alpha discriminant} implies the realness of $\alpha_1'(0), \alpha_2'(0)$.

  We also need to ask both $\alpha_1'(z)$ and $\alpha_2'(z)$ to be positive for each $z \in \real{}$. First of all, by looking at equations \eqref{eq:alpha derivative} we can notice that $\alpha_1'(z)$ has the same sign as $\alpha_2'(z)$, so we just need to ask for $\alpha_2'(z) > 0$ for each $z$. This is true if $\gamma'_2(z) > \frac{\alpha_1'(0)}{L} - \gamma_1(0)$, so a sufficient condition to ask is $\frac{\alpha_1'(0)}{L} - \gamma_1(0) < 0$. Now, taking the convention that $\alpha'_1(0)$ is obtained using the solution with the minus sign and expanding equation \eqref{eq:alpha discriminant}, the last condition is true when
  \begin{equation}
    \gamma_1'(0) \geq \gamma_2'(0) - \sqrt{G^2 - \frac{4}{L} \gamma_1'(0) \gamma_2'(0)}
  \end{equation}
  which is trivially true if $\gamma_1'(0) \geq \gamma_2'(0)$.
\end{proof}
\begin{proof}{ of \Cref{th:split_second_order}}
As in the proof of Theorem \ref{th:split_first_order}, the sum of the first two constraints recovers a formula similar to the second-order barrier function constraint, namely:
  \begin{align*}
    \ddot{h} + \gamma_0\left ( c_2 + \frac{c_1}{\derr} \right ) \left(\gamma'_1 \dot{h} +\gamma_2 (\dot{h} + \gamma_1 (h(x)) \right) - M \geq 0.
  \end{align*}

   To make use of the comparison lemma, we need to prove that there exists two, time-varying, class $\cK$ functions $\alpha_1, \alpha_2$ such that:
  \begin{multline}\label{eq:comparison bound}
    \frac{\partial \alpha_1}{\partial t} + \alpha_1'(h) \dot{h} + \alpha_2( \dot{h} + \alpha_1(h)) \geq \\
    L \left (\gamma_1'(h) \dot{h} + \gamma_2( \dot{h} + \gamma_1(h))\right) - M
  \end{multline}

  where the explicit time dependence has been omitted for conciseness,
  and $M$ is a design parameter that will be used to deal with the unknown $\frac{\de \alpha_1}{\de t}$; in fact, considering the results of Lemmata~\ref{lemma:alpha12} and~\ref{lemma:alpha12 classK}, eq. \eqref{eq:comparison bound} holds if $L\geq K$ and if $\frac{\partial \alpha_1}{\partial t}\geq M$.

  Using \eqref{eq:scale second} to substitute $c_1$ in $L \geq K$, we get: $c_2 + \frac{\gamma_0 - c_2}{\derr} \geq K$, which is equivalent to $c_2 (1 - \derr) \geq K \derr - \gamma_0$, since $\derr > 0$. As in the case analyzed in Theorem \ref{th:split_first_order}, based on the value of $\derr$ we get
  \begin{align}
    \label{eq:bounds}
    \begin{cases}
      \gamma_0 \geq K, \quad \derr = 1 \\
      c_2 \geq \frac{K \derr - \gamma_0}{ \derr - 1}, \quad \derr > 1 \\
      c_2 \leq \frac{K \derr - \gamma_0}{ \derr - 1}, \quad \derr < 1
    \end{cases}
  \end{align}

  Now, under the assumption that $\gamma_0 > K$, one can prove the function $\frac{Kd - \gamma_0}{d - 1}$ is monotonically increasing over the intervals $]-\infty, 1[$ and $]1, +\infty[$, and that it has the shape showed in \Cref{fig:Bounds}, which means $\frac{Kd - \gamma_0}{d - 1} > a $ over the interval $]-\infty, 1[$ and $\frac{Kd - \gamma_0}{d - 1} < K $ over the interval $]1, +\infty[$. This means that, given a relative error range $[d_m, d_M]$, we can impose constraint \ref{eq:bounds} as long as $d_M > 1 > d_m > 0$.

We now need to prove that $\frac{\partial \alpha_1}{\partial t} \geq -M$ for each $t$. This is equivalent to $M \geq \max_t \left \lvert \frac{\partial \alpha_1}{\partial t} \right \rvert$, which can be imposed to be true whenever $\frac{\partial \derr}{\partial t}$ is bounded.

  Now, as in Theorem \ref{th:split_first_order}, we can invoke Theorem \ref{th:var_alpha} and the comparison lemma \cite{khalilnonlinear} to conclude the proof.
\end{proof}

\begin{figure}
  \centering
  \includegraphics[width=1\linewidth]{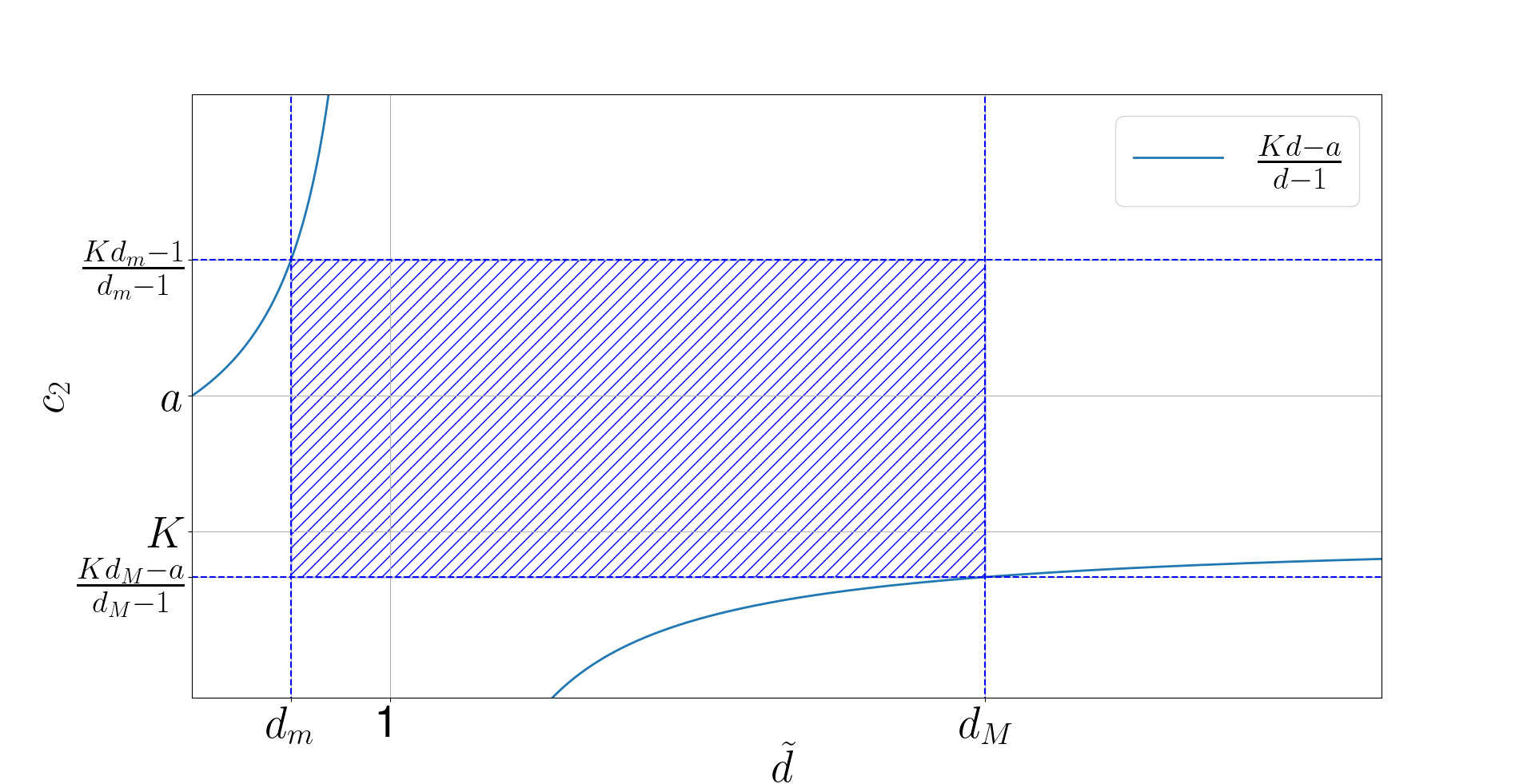}
  \caption{The figure shows the relation between the allowable error ration $\derr$ and the resulting range for $c_2$ in Theorem \ref{th:split_second_order}. As we can see, a trade-off exists between the bounds on $\derr$ and the resulting range for $c_2$.}
  \label{fig:Bounds}
\end{figure}

\subsection{Multiple Features}
When $N > 1$ features need to be tracked, we propose to extend the quadratic programming of Theorems \ref{th:split_first_order} and \ref{th:split_second_order} by adding for each feature a couple of optimization variables $c_{i1}, c_{i2}$, and the same set of constraints proposed for the case of the single features.

\begin{remark}
  Having multiple constraints may lead to the infeasibility of the optimization problem. To overcome these issues, and improve the numerical stability of the optimization, one may add some slack variables $\delta_{i1}, \delta_{i2} > 0$ to each couple of constraints to allow for constraints violation. The square of the norm of these variables should be added to the cost function, weighted by some constant chosen by the designer.
\end{remark}




\section{Numerical Simulation}
\label{sec:Experiments}
\begin{figure}
\centering
\begin{minipage}{.23\textwidth}
  \centering
  \includegraphics[width=\linewidth]{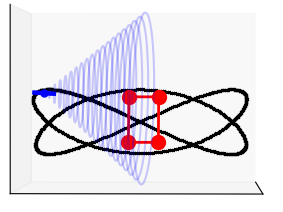}
  (a)
\end{minipage}%
\begin{minipage}{.23\textwidth}
  \centering
  \includegraphics[width=\linewidth]{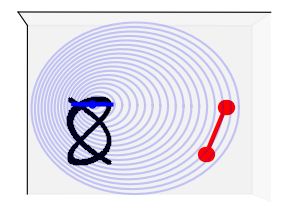}
  (b)
\end{minipage}
\begin{minipage}{.23\textwidth}
  \centering
  \includegraphics[width=\linewidth]{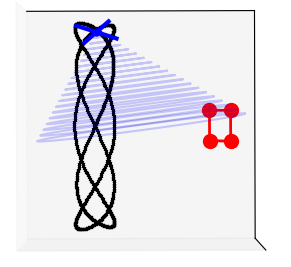}
  (c)
\end{minipage}%
\begin{minipage}{.23\textwidth}
  \centering
  \includegraphics[width=\linewidth]{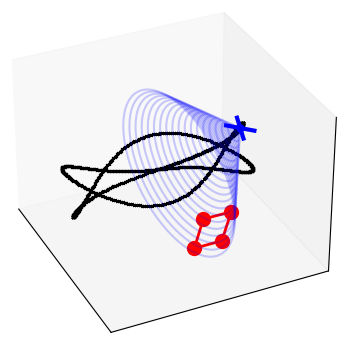}
  (d)
\end{minipage}
\caption{Snapshot of the simulation scenario of the numerical experiments. We can see that the agent (represented by a blue cross) needs to turn to keep all the features (red points) in its field of view (blue cone) to follow the reference trajectory (black line). (a) Front view. (b) Side view (c) Top-down view. (d) Tilted view.}
\label{fig:trajectories}
\end{figure}

To validate the proposed approach, in this section, we present two numerical simulations\footnote{The code used to implement the simulation can be found at https://github.com/biagio-trimarchi/A-Control-Barrier-Function-Candidate-for-Limited-Field-of-View-Sensors.git} involving the control of a double integrator system, as presented in Section \ref{sec:Approach}, and the control of a quadrotor. In both simulations, the task of each agent is to follow the same trajectory $p_{ref}(t) = {[\cos(0.3t), 10 \cos(0.2t), 2 \cos(0.2t)]}^T$ (depicted in Fig. \ref{fig:trajectories}). The features to keep in the field of view, $q_1 = {[7.0, -1.5, 1.5]}^T$, $q_2 = {[7.0, 1.5, 1.5]}^T$, $q_3 = {[6.0, 1.5, -1.5]}^T$, and $q_4 = {[6.0, -1.5, -1.5]}^T$ form the corners of a rectangle. 
We use a conic field of view with half-aperture $\psi_F = \frac{\pi}{6}$. The two simulations show an extreme case where the agents cannot estimate the distance from the features (i.e., $\hat{d} = 1$). The parameters mentioned in Theorem \ref{th:split_second_order} are $\gamma_0 = 3$, $d_M = \unit[15]{m}$, $d_m = 0.5m$ and $M = 0$.

Despite the fact that, in this extreme case, the assumptions of \Cref{th:split_second_order} are not satisfied, our approach can maintain the field of view constraints.

Fig. \ref{fig:acceleration_control}(a) and \ref{fig:acceleration_control}(b) show, respectively, the tracking error of the proposed control scheme and the minimum of the Control Barrier Functions $h_i(x)$ values in the double integrator scenario. The nominal control law of the agent is $u^*(x) = k_p (p_{ref} - p) + k_v (\dot{p}_{ref} - v) + \ddot{p}_{ref}$, with $k_p = 20.8$ and $k_v = 13.3$.

Fig. \ref{fig:quadrotor}(a) and \ref{fig:quadrotor}(b) show, respectively, the tracking error of the proposed control scheme and the minimum of the Control Barrier Functions $h_i(x)$ values in the quadrotor scenario. The nominal control law of the agent is the one proposed in \cite{Lee2010geometric}, with $k_p = 20.8$, $k_v = 13.3$., $k_R = 54.81$, and $k_\omega = 10.54$.

We can see, in both cases, that the constraint of imposed by the Control Barrier Function is always satisfied, i.e. the features never leave the field of view of the agent, and the reference trajectory is tracked with a bounded tracking error.

\begin{figure}
\centering
\begin{minipage}{.24\textwidth}
  \centering
  \includegraphics[width=\linewidth]{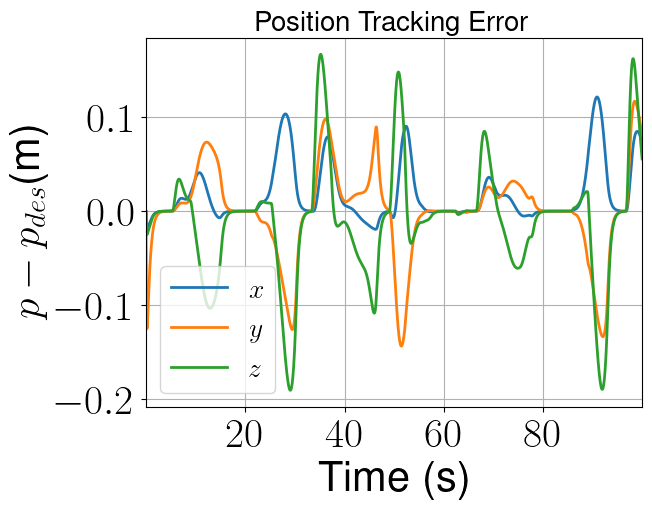}
  (a)
\end{minipage}%
\begin{minipage}{.24\textwidth}
  \centering
  \includegraphics[width=\linewidth]{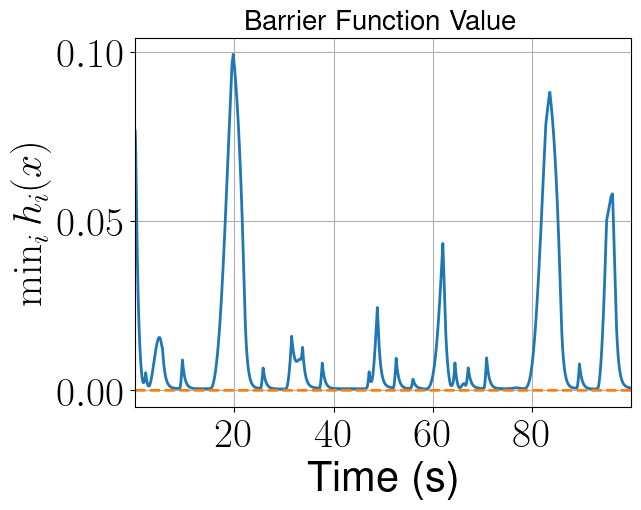}
  (b)
\end{minipage}
\caption{\emph{Acceleration Control:} Plot (a) shows the position tracking error of a rigid body actuated both in linear and angular acceleration along the prescribed path. Plot (b) shows the minimum value among the barrier functions associated with the features.}
\label{fig:acceleration_control}
\end{figure}

\begin{figure}
\centering
\begin{minipage}{.24\textwidth}
  \centering
  \includegraphics[width=\linewidth]{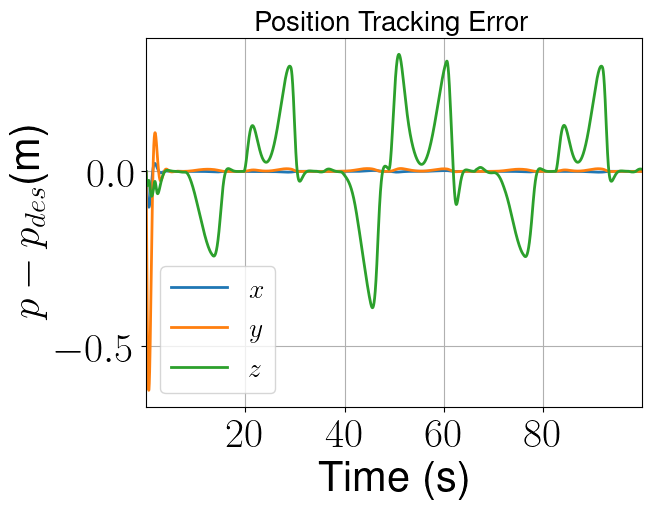}
  (a)
\end{minipage}%
\begin{minipage}{.24\textwidth}
  \centering
  \includegraphics[width=\linewidth]{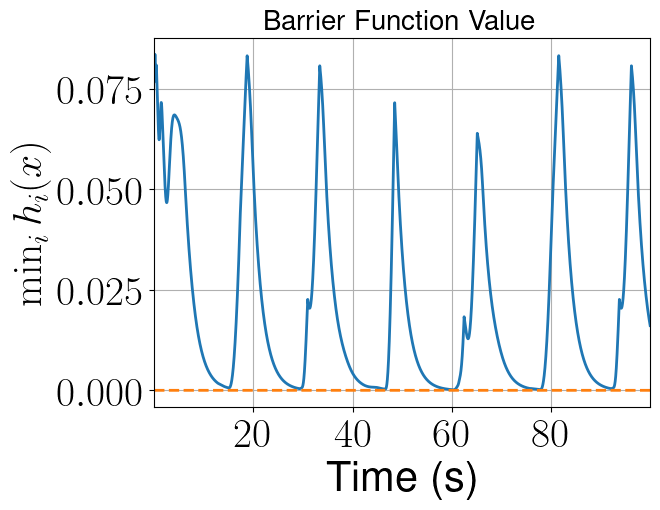}
  (b)
\end{minipage}
\caption{\emph{Quadrotor:} Plot (a) shows the position tracking error of a quadrotor actuated both in trust and torques along the prescribed path. Plot (b) shows the minimum value among the barrier functions associated with the features.}
\label{fig:quadrotor}
\end{figure}

\section{Conclusions}
\label{sec:Conclusions}
This paper presented a Control Barrier Function controller that deals with visual sensors' limited field of view. The barrier function was implemented by splitting the usual constraint into two different constraints, which allowed us to remove the effects of bounded distance measurement errors on the control law.

We supported the theoretical results by simulating the control of a fully actuated rigid body and of a quadrotor, both tasked with tracking a gate-like structure while following a trajectory. Both simulations confirmed the presented theoretical results and showed that the bounds used in the theorems may be overly conservative and could be relaxed.

Future works will focus on implementing real-world experiments to give more support to the solution presented in this work and to expand the framework to track also moving visual features. Moreover, we plan to investigate the possibility of designing a vision-based trajectory-tracking control law by transferring the theory presented in this article to the Control Lyapunov Function \cite{bahreinian2021robust} framework.




\bibliographystyle{unsrt}

\bibliography{biblio/IEEEfull,biblio/IEEEConfFull,biblio/OtherFull,
  biblio/tron,%
  biblio/formationControl,%
  biblio/websites,%
  paper_bibliography%
  }
\setcounter{section}{0}
\renewcommand\thesection{\Alph{section}}
\section{Appendix}
This section contains the computations of the terms $\grad_p h$, $\grad_R h$, $\hess_p h$, and $\hess_R h$. To understand the following equations, it is useful to keep the this equalities in mind
\begin{align*}
    \dot{\bearing} &= - \frac{\projectionM{\bearing}}{d} v \\
    \dot{d} &= - \bearing^T v
\end{align*}

According to the usual Taylor expansion definition, to find the expression of $\grad_p h$ and $\grad_R h$ we need to compute the time derivative of $h$
\begin{align*}
    \dot{h} &= -\frac{\cameraAxis^T R^T \projectionM{\beta}}{d} v + \bearing^T R \hatMap{\omega} \cameraAxis = \\
    &= -\frac{\cameraAxis^T R^T \projectionM{\beta}}{d} v - \bearing^T R \hatMap{\cameraAxis}\omega
\end{align*}

where the last equality comes from the properties of the hat map. By confronting the above terms with the Taylor expansion of $h$ and the definitions of $\grad_p h$ and $\grad_R h$, one can readily see that
\begin{align*}
    \grad_p h &= -\frac{\cameraAxis^T R^T \projectionM{\beta}}{d} \\
    \grad_R h &= -\bearing^T R \hatMap{\cameraAxis}
\end{align*}

The formula for the hessian can be found by computing the second derivative of $h$

\begin{align*}
    \ddot{h} &= -\frac{\cameraAxis^T R^T \projectionM{\beta}}{d} a + \frac{d}{dt} \left ( -\frac{\cameraAxis^T R^T \projectionM{\beta}}{d} \right ) v + \\ 
    & - \bearing^T R \hatMap{\cameraAxis} \alpha -\bearing^T R \hatMap{\omega} \hatMap{\cameraAxis} \omega - \dot{\bearing}^T R \hatMap{\cameraAxis} \omega.
\end{align*}
To find the expressions of the hessian of $h$, we need to expand and rearrange some of the above terms.

First of all, using again the properties of the hat map
\begin{align*}
    -\bearing^T R \hatMap{\omega} \hatMap{\cameraAxis} \omega = \omega^T \hatMap{R \bearing} \hatMap{\cameraAxis} \omega
\end{align*}
we can already derive $\metrica{\hess_R h [\omega]}{\omega}= \omega^T \hatMap{R^T \bearing} \hatMap{\cameraAxis} \omega$.

Before going on, let us define $z = R \cameraAxis$, allowing us to write the following expressions more compactly. Moreover, notice that $\projectionM{\bearing}z = z - (\bearing^T z) \bearing$. 

Now, we can expand the term with the time derivative
\begin{align*}
    \frac{d}{dt} \left ( -\frac{\cameraAxis^T R^T \projectionM{\beta}}{d} \right ) v = -v^T \frac{\left( \projectionM{\bearing} z \right) \bearing^T}{d^2} v - \frac{\dot{z}^T \projectionM{\bearing}}{d} v - \frac{z^T \dot{\projectionM{\bearing}}}{d} v
\end{align*}
and by noticing that
\begin{align*}
    -\frac{\dot{z}^T \projectionM{\bearing}}{d} v = -\dot{\bearing}^T R \hatMap{\cameraAxis} \omega = v^T \frac{\projectionM{\bearing} R \hatMap{\cameraAxis}}{d} \omega
\end{align*}
we can write $\metrica{\hess_R h[\omega]}{v} = \metrica{\hess_p h[v]}{\omega}= v^T \frac{\projectionM{\bearing} R \hatMap{\cameraAxis}}{d} \omega$. To find the last component of the hessian, we need to expand the remaining two terms
\begin{align*}
    -v^T\frac{\left( \projectionM{\bearing} z \right) \bearing^T}{d^2} v &= -v^T\frac{\left( z - (\bearing^T z) \bearing \right) \bearing^T}{d^2} v = \\
    = -v^T&\frac{z \bearing^T}{d^2} v + v^T\frac{\left( (\bearing^T z) \bearing \right) \bearing^T}{d^2} v = \\
    = -v^T&\frac{z \bearing^T}{d^2} v - v^T\frac{(\bearing^T z) \projectionM{\bearing}}{d^2}  v + v^T\frac{(\bearing^T z) I}{d^2}  v \\
    - \frac{z^T \dot{\projectionM{\bearing}}}{d} v &= \frac{(z^T\dot{\bearing}) \bearing^T}{d} v + \frac{(z^T\bearing) \dot{\bearing}^T}{d} v = \\
    = - v^T&\frac{\bearing(z^T \projectionM{\bearing})}{d^2} v - v^T\frac{(z^T\bearing) \projectionM{\bearing}}{d^2} v = \\
    = - v^T&\frac{\bearing( z^T - (\bearing^T z) \bearing^T )}{d^2} v - v^T\frac{(z^T\bearing) \projectionM{\bearing}}{d^2} v = \\
    = - v^T&\frac{\bearing z^T }{d^2} v + v^T \frac{(\bearing^T z)I )}{d^2} v -v^T\frac{2(z^T\bearing) \projectionM{\bearing}}{d^2} v
\end{align*}

By combining the terms, we find the final expression for $\metrica{\hess_p h [v]}{v}$, which is the same reported in \eqref{eq:hessian_velocity}.


\end{document}